\documentclass[letterpaper,12pt]{article}
\usepackage[T1]{fontenc}
\usepackage{tabularx, amsmath, amsthm, amsfonts, mathtools, amssymb, physics, graphicx, cite, algorithm, multirow, braket, adjustbox, array} 
\usepackage[margin=1in,letterpaper]{geometry} 
\usepackage[final]{hyperref} 
\hypersetup{
	colorlinks=true,       
	linkcolor=blue,        
	citecolor=blue,        
	filecolor=magenta,     
	urlcolor=blue         
}

\newtheorem{theorem}{Theorem}[section]

\newtheorem{lemma}[theorem]{Lemma}
\newtheorem*{remark}{Remark}

\newtheorem*{notation}{Notation}

\DeclareMathOperator{\inv}{inv}
\DeclareMathOperator{\erase}{erase}
\newcommand*{\defeq}{\stackrel{\text{def}}{=}}
\DeclarePairedDelimiter\floor{\lfloor}{\rfloor}
\DeclarePairedDelimiter\tfloor{\lfloor}{\rfloor _ {T}}
\DeclarePairedDelimiter\tremain{ \{ }{ \} _ {T}}
\DeclarePairedDelimiter\vtremain{ \{ }{ \} _ {VT}}

\def\brcurs{{\mbox{$\resizebox{.09in}{.08in}{\includegraphics[trim= 1em 0 14em 0,clip]{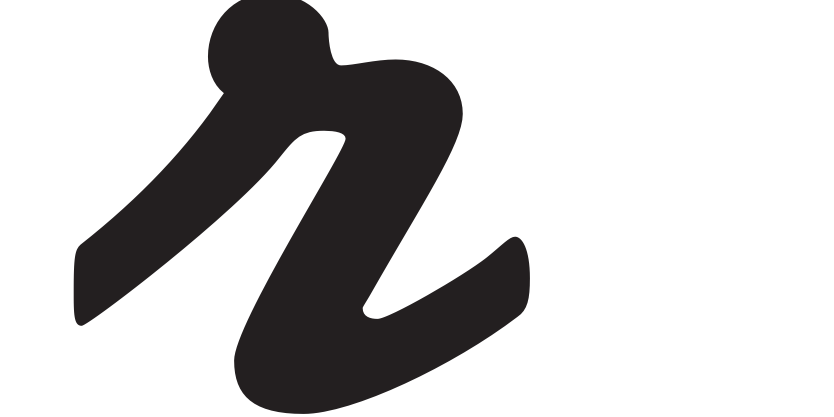}}$}}}

\input ulem.sty\relax
\catcode`\@=11 %

\font\uwavefont=lasyb10 scaled 652

\def\uwave{%
  \bgroup
    \markoverwith{%
      \lower3.5\p@\hbox{\uwavefont\char58}%
    }%
  \ULon
}


\begin{document}

\title{Stop using Landau gauge for Tight-binding Models}
\author{Seungwon Jung\footnote{seungwonjung1@gmail.com}}
\date{}
\maketitle

\begin{abstract}
To analyze the electronic band structure of a two-dimensional (2D) crystal under a commensurate perpendicular magnetic field, tight-binding (TB) Hamiltonians are typically constructed using a magnetic unit cell (MUC), which is composed of several unit cells (UC) to satisfy flux quantization. However, when the vector potential is constrained to the Landau gauge, an additional constraint is imposed on the hopping trajectories, further enlarging the TB Hamiltonian and preventing incommensurate atomic rearrangements. In this paper, we demonstrate that this constraint persists, albeit in a weaker form, for any linear vector potential ($\vb{A}(\vb{r})$ linear in $\vb{r}$). This restriction can only be fully lifted by using a nonlinear vector potential. With a general nonlinear vector potential, a TB Hamiltonian can be constructed that matches the minimal size dictated by flux quantization, even when incommensurate atomic rearrangements occur within the MUC, such as moiré reconstructions. For example, as the twist angle \(\theta\) of twisted bilayer graphene (TBG) approaches zero, the size of the TB Hamiltonian scales as \(1/\theta^4\) when using linear vector potentials (including the Landau gauge). In contrast, with a nonlinear vector potential, the size scales more favorably, as \(1/\theta^2\), making small-angle TBG models more tractable with TB.
\end{abstract}

\section{Introduction}
TB models are widely used on numerical calculation of the electronic band structure because it is simple but expressive. To apply TB to 2D crystals under a uniform perpendicular magnetic field, we use the method introduced by D. Hofstadter\cite{hofstadter}: We enlarge the UC to an MUC such that the flux quantization condition is satisfied, allowing the internal degrees of freedom of the MUC to be wrapped into a torus. While D. Hofstadter's square lattice can be successfully solved using the simple Landau gauge, honeycomb lattice requires additional techniques to reduce the TB Hamiltonian to the exact size determined by the flux quantization\cite{rammal}. However, most of current researchers and software packages still use the Landau-like gauge due to the lack of a general technique to tackle this problem for arbitrary 2D crystals\cite{Kwant, KITE, WannierTools, HofstadterTools, TBPLaS, BerryEasy,GPUQT,TBTK,TopoTB}. The computational complexity of solving the eigenproblem scales from quadratic to cubic with the size of the TB Hamiltonian (depending on the sparsity of the Hamiltonian or the type of eigenproblem). Hence, the use of Landau-like gauge greatly limits the potential of the TB method and creates the misconception that TB is only suitable for simple textbook examples.

Here, we demonstrate that a nonlinear vector potential (Equation \ref{eq:Nonlinear vector potential}) provides a general solution to this problem, yielding a formula (Equation \ref{Eq:Hk for nonlinear vector potential}) that is as simple as the Landau gauge. First, we precisely define the problem in a general context. When constrained to linear vector potentials like the Landau gauge, additional commensurability constraints on hopping trajectories are required: (1) the hopping trajectories must be commensurate (i.e., rank=2), and (2) the size of the TB Hamiltonian becomes significantly larger. We show that the lower bound for the TB Hamiltonian size with general linear vector potentials does not match the exact limit imposed by flux quantization. We also introduce a general form of nonlinear vector potential that only requires flux quantization, significantly expanding the capabilities of the TB method by (1) enabling both commensurate and incommensurate atomic rearrangements without changing MUC, and (2) this makes the size of the MUC to be only bigger than that of UC just by a constant factor, making intractable problems (such as magic-angle moiré-TBG whose twisted angle is as small as $\sim 1.1^{\circ}$) tractable. 

\begin{remark}
Mathematical notations are given in the Chapter \ref{Appendix: Mathematical Notations} (Appendix A). To avoid any misunderstandings, please review this section before reading the main text.
\end{remark}

\section{Crystal structure and Tight-binding Model}
2D crystal is constructed as the infinite repetition of identical groups of atoms. A group is called the basis, while the set of points to which the basis is attached is called the lattice. The primitive translation vectors of the lattice are $\vb{v_1}$ and $\vb{v_2}$ and these are combined as a matrix $V \defeq \mqty(\vb{v_1} & \vb{v_2})$ where $|V|>0$. UCs of the lattice are represented with indices $\vb{j}(\vb{r}) \defeq \floor*{\vb{r}}_{V}$. Basis is composed of atoms whose relative position with respect to the lattice point $V\vb{j}$ is $\brcurs_n~ (n=1,2,...,N_{atom})$. Hopping network for TB is also defined on the basis: $(n_m,~n_m',~\vb{\Delta}_m,~t_m) ~(m=1,2,...,N_{hop})$ denotes that electron hops from atom $n_m$ of UC $\vb{j}$ to atom $n_m'$ of UC $\vb{j}+\vb{\Delta}_m$ with hopping energy $t_m$ and also hops in the reverse direction with hopping energy ${t_m}^{*}$, for every $\vb{j}$. It is assumed that the hopping network consists of a single connected component, ensuring that the TB Hamiltonian cannot be trivially decomposed into a tensor product of smaller Hamiltonians. Then, $(1,0)$ and $(0,1)$ are in $span(S_{hop})$ where $S_{hop} \defeq \{ \vb{\Delta}_m+V^{-1} (\brcurs_{n_m'}-\brcurs_{n_m}) | m=1,2,...,N_{hop} \}$.

Uniform magnetic field $\vb{B}=B\hat{z}=\curl \vb{A}$ is applied on the 2D crystal. Magnetic flux on a unit cell is normalized with magnetic flux quantum as $\Phi \defeq \frac{qB|V|}{h}$. Magnetic (super)lattice is defined with primitive vectors $\vb{u_1}$, $\vb{u_2}$, and the transformation matrix $T \defeq V^{-1} \mqty(\vb{u_1} & \vb{u_2})\in \mathbb{Z}_{2\times2}$ where $|T|>0$. MUCs of the magnetic lattice are represented with indices $\vb{i}(\vb{r}) \defeq \tfloor*{\vb{j(\vb{r})}}$. MUC $\vb{i}$ is composed of UCs $\vb{j}\in T\vb{i} + J$ where a set $J$ is defined as $J \defeq \left\{ \vb{j}\in \mathbb{Z}^2 | \tfloor*{\vb{j}}=\vb{0} \right\}$. Atoms in MUC is represented as $(\vb{j},~n)~(\vb{j}\in J,~n=1,2,...,N_{atom})$ and edges of the hopping network are also inherited likewise as $((\vb{j},~n_m),~(\tremain*{\vb{j}+\vb{\Delta}_m},~n_m'),~\tfloor{\vb{j}+\vb{\Delta_m}},~t_m)~(\vb{j}\in J,~m=1,2,...,N_{hop})$.

\begin{figure}[H]
\label{fig:Crystal}
\centering \includegraphics[width=\columnwidth]{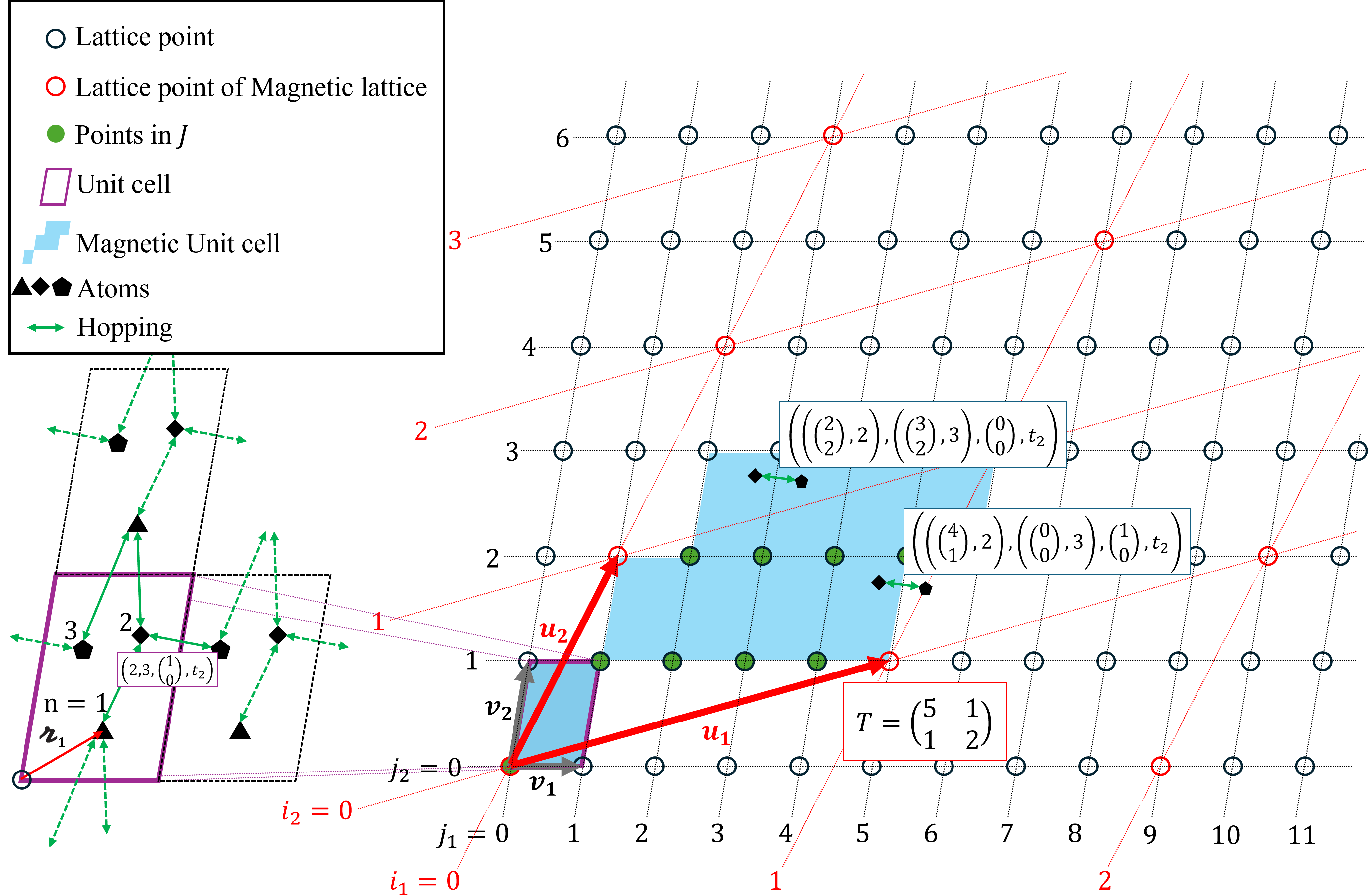}
\caption{
Example of how parameters define the lattice and the magnetic lattice. Crystal in this figure has primitive vectors $\vb{v_1},\vb{v_2}$ and has three atoms ($n=1,2,3$) in its UC (marked with purple boundaries). Indices $j_1$ and $j_2$ are marking its lattice points. Among four hoppings of the crystal ($m=1,2,3,4$), second hopping was described as $\big(2,3,\big( \begin{smallmatrix} 1\\0 \end{smallmatrix}\big),t_2\big)$. Magnetic lattice of the crystal was constructed with $T=\big(\begin{smallmatrix} 5&1\\1&2\end{smallmatrix}\big)$. Magnetic lattice has primitive vectors $\vb{u_1},\vb{u_2}$ and its MUC (shaded in light blue) is composed of 9 UCs, each corresponding to a point in $J$ (marked with green dots). Hopping is inherited from the original lattice and two of them are described in the text boxes with blue boundary lines.
}
\end{figure}

With Peierls substitutition\cite{Peierls}, corresponding TB Hamiltonian is given by
\begin{equation} \label{eq:Real-space Hamiltonian}
    \textrm{H} \defeq -\sum_{\substack{\vb{i}\in\mathbb{Z}^2,~\vb{j}\in J,\\ m=1,2,...,N_{hop}}} t_m
~\exp( i \frac{q}{\hbar} \int_{V(T\vb{i}+\vb{j})+\brcurs_{n_m}}^{V(T\vb{i}+\vb{j}+\vb{\Delta}_m)+\brcurs_{n_m'}} \vb{A} \cdot d\vb{r} )
    ~c_{\vb{i}+\tfloor{\vb{j}+\vb{\Delta_m}},~(\tremain*{\vb{j}+\vb{\Delta}_m},~n_m')}^{\dag} ~c_{\vb{i},~(\vb{j},~n_m)}+h.c.
\end{equation}
where $c_{\vb{i},(\vb{j},n)}$ and $c_{\vb{i},(\vb{j},n)}^{\dag}$ are annihilation and creation operators of the electron at the atom $(\vb{j},n)$ of MUC $\vb{i}$, respectively. The reciprocal lattice of the magnetic lattice is defined in the momentum space with primitive translation vectors composing the columns of $2\pi (V^T)^{-1} (T^T)^{-1}$ and the unit cell, Brillouin zone (BZ). Annihilation and creation operators, $c_{\vb{k},(\vb{j},n)}$ and $c_{\vb{k},(\vb{j},n)}^{\dag}$, are also defined in the BZ and Fourier-transformed into that of real space as
\begin{equation}\begin{split}\label{eq:Fourier transform of operators}
    & c_{\vb{i},(\vb{j},n)} = \dfrac{\sqrt{|V||T|}}{2\pi} \iint_{BZ} d^2 \vb{k}~ c_{\vb{k},(\vb{j},n)} ~\exp(i (VT\vb{i}) \cdot \vb{k}), \\
    & c_{\vb{k},(\vb{j},n)} = \dfrac{\sqrt{|V||T|}}{2\pi} \sum_{\vb{i}} c_{\vb{i},(\vb{j},n)} ~\exp(-i (VT\vb{i}) \cdot \vb{k}).
\end{split}\end{equation}

\section{Linear vector potential}
Assume that the vector potential is constrained to be linear. In other words, there exists a real $2\times 2$ matrix $A$ satisfying $A_{21}-A_{12}=\Phi$ and
\begin{equation}\label{eq:Linear vector potential}
\vb{A} (\vb{r}) = \dfrac{h}{q} (V^T)^{-1} A V^{-1} \vb{r}.
\end{equation}

By the Theorem \ref{theorem:k-space Hamiltonian} (1), if and only if
\begin{equation}\label{cond-linear}
^{\forall} \vb{R} \in S_{hop},~ T^T A^T \vb{R} \in \mathbb{Z}^2,
\end{equation}
$\textrm{H} = \iint_{BZ} d^2\vb{k} H_{\vb{k}}$ where the kernel of the Hamiltonian is
\begin{equation}\begin{split}
        H_{\vb{k}} = -\sum_{\substack{\vb{j}\in J,\\ m=1,2,...,N_{hop}}} t_m
~ &\exp( -i (VT\tfloor*{\vb{j}+\vb{\Delta}_m}) \cdot \vb{k}+2\pi i \int_{\vb{j}+V^{-1} \brcurs_{n_m}}^{\vb{j}+\vb{\Delta}_m+V^{-1}\brcurs_{n_m'}} (A \vb{R}) \cdot d\vb{R} )
    \\ \times & ~c_{\vb{k},~(\tremain*{\vb{j}+\vb{\Delta}_m},~n_m')}^{\dag} ~c_{\vb{k},~(\vb{j},~n_m)}+h.c.~.
\end{split}\end{equation}

To satisfy the Equation \ref{cond-linear}, $rank(S_{hop}) =2$ should be met, because, if $rank(S_{hop}) \geq 3$, we can find an arbitrary small nonzero $\vb{R}\in span(S_{hop})$. Since $rank(S_{hop})=2$, there exists a matrix $D=\mqty(\vb{d}_1 & \vb{d}_2)$ where $span(S_{hop})=span(\{\vb{d}_1,~\vb{d}_2\})$ and
\begin{equation}
    D=\dfrac{1}{\alpha \beta_1 \beta_2 \gamma}\mqty(\beta_1 & \\ \beta_2 \gamma_{21} & \beta_2 \gamma)~ or ~ \dfrac{1}{\alpha \beta_1 \beta_2 \gamma}\mqty(\beta_1 \gamma & \beta_1 \gamma_{12} \\ & \beta_2)
\end{equation}
where positive integers $\alpha,~\beta_1,~\beta_2,~\gamma$ and non-negative integers $\gamma_{12},~\gamma_{21}$ satisfy $(\beta_1,~\beta_2)=1$ and $\gamma_{12} \gamma_{21} \equiv 1 (mod~\gamma)$, as shown in the Table \ref{Table:Parameters for D} for well-known crystals. Now, Equation \ref{cond-linear} becomes
\begin{equation}\label{eq:new cond-linear}
    D^T A T \in \mathbb{Z}_{2\times2},
\end{equation}
which implies $\Phi \in \mathbb{Q}$.

By Theorem \ref{theorem:main theorem for linear vector potential}, the size of the TB Hamiltonian $N_{atom} |T|$ is always a multiple of $N_{atom} q_{\Phi / \alpha}$. A solution $(A,~T)$ satisfying the lower bound $|T|=q_{\Phi / \alpha}$ is given by
\begin{equation}\begin{split}\label{eq: Linear A,T}
A & =\Phi \mqty(\delta \\ \beta_1) \mqty(\inv_{\delta} \beta_1 & -\inv_{\beta_1} \delta)
\\
and~T & =\mqty(\beta_1 & \inv_{\beta_1} \delta \\ -\delta & \inv_{\delta} \beta_1) \mqty(q_{\Phi / \alpha} & \\ & 1),
\\
where~ \delta & \defeq \beta_2 (\gamma (1+ \gamma_{21} \inv_{\erase_{\gamma} \beta_1} \gamma) -\gamma_{21})~and~\beta_1~are~coprime.
\end{split}\end{equation}

With linear vector potentials, the lower bound of the size of the TB Hamiltonian is $N_{atom} q_{\Phi / \alpha}$, which doesn't coincide with the exact lower bound $N_{atom} q_{\Phi}$ given by the flux quantization.

\begin{table}[!h]
\begin{center}
\caption{Parameters defining the matrix $D$}
\label{Table:Parameters for D}
\begin{tabular}{ |p{2.5cm}||p{2cm}|p{0.5cm}|p{.5cm}|p{.5cm}|p{.5cm}|p{.5cm}|   }
 \hline
 Crystal& $\alpha$ & $\beta_1$ & $\beta_2$ & $\gamma$ & $\gamma_{21}$ & $\gamma_{12}$\\
 \hline
 Square & 1&1&1&1&0&0
 \\
 Triangular & 1&1&1&1&0&0
 \\
 Honeycomb&1&1&1&3&1&1
 \\
 Kagome&2&1&1&1&0&0
 \\
 moiré-TBG \ref{moire-TBG} & $(3t^2+1)/2$ & 1 & 1& 3&1&1
 \\
 \hline
\end{tabular}
\end{center}
\end{table}

\subsection{Landau-like vector potential}
Assume that the linear vector potential is further constrained to be Landau-like:
\begin{equation}\label{eq: Landau-like A}
A=\mqty(0&0\\ \Phi & 0)
\end{equation}

Since $D^T A = \dfrac{1}{\alpha \beta_1 \beta_2 \gamma}\mqty(\beta_1 \gamma &\\ \beta_1 \gamma_{12} & \beta_2) \mqty(0&0\\ \Phi & 0)=\dfrac{\Phi}{\alpha \beta_1 \gamma} \mqty(0&0\\1&0)$,
to satisfy the Equation \ref{eq:new cond-linear},
\begin{equation}\label{eq: Landau-like T}
T=\mqty(q_{\frac{\Phi }{ \alpha \beta_1 \gamma}} & \\ & 1)
\end{equation}
is proved to be the best solution according the Theorem \ref{theorem:|T|=q^2/(q,ad-bc)} (2).

The lower bound of the size of the TB Hamiltonian with Landau-like vector potential is $N_{atom} q_{\frac{\Phi }{ \alpha \beta_1 \gamma}}$, which is not enough even for a textbook lattice, the Honeycomb lattice. The honeycomb lattice has non-unity $\beta_1 \gamma = 3$, as shown in the Table \ref{Table:Parameters for D}. Hence, Landau-like vector potential needs three times bigger MUC compared to generalized linear vector potential when $3\nmid p_{\Phi}$, as shown in Figure \ref{fig:Hopnetwork} (a,b) for the $\Phi=\frac{1}{2}$ case. Linear vector potential minimizes the MUC with the diagonal term of $A$, such as $1$ in $A=\big(\begin{smallmatrix} 1&0\\  \frac{1}{2}&0\end{smallmatrix}\big)$ of $\Phi=\frac{1}{2}$ case. This diagonal term is mathematically equivalent to the Rammal transformation of the wavefunctions \cite{rammal}.

\begin{figure}[H]
\centering \includegraphics[width=\columnwidth]{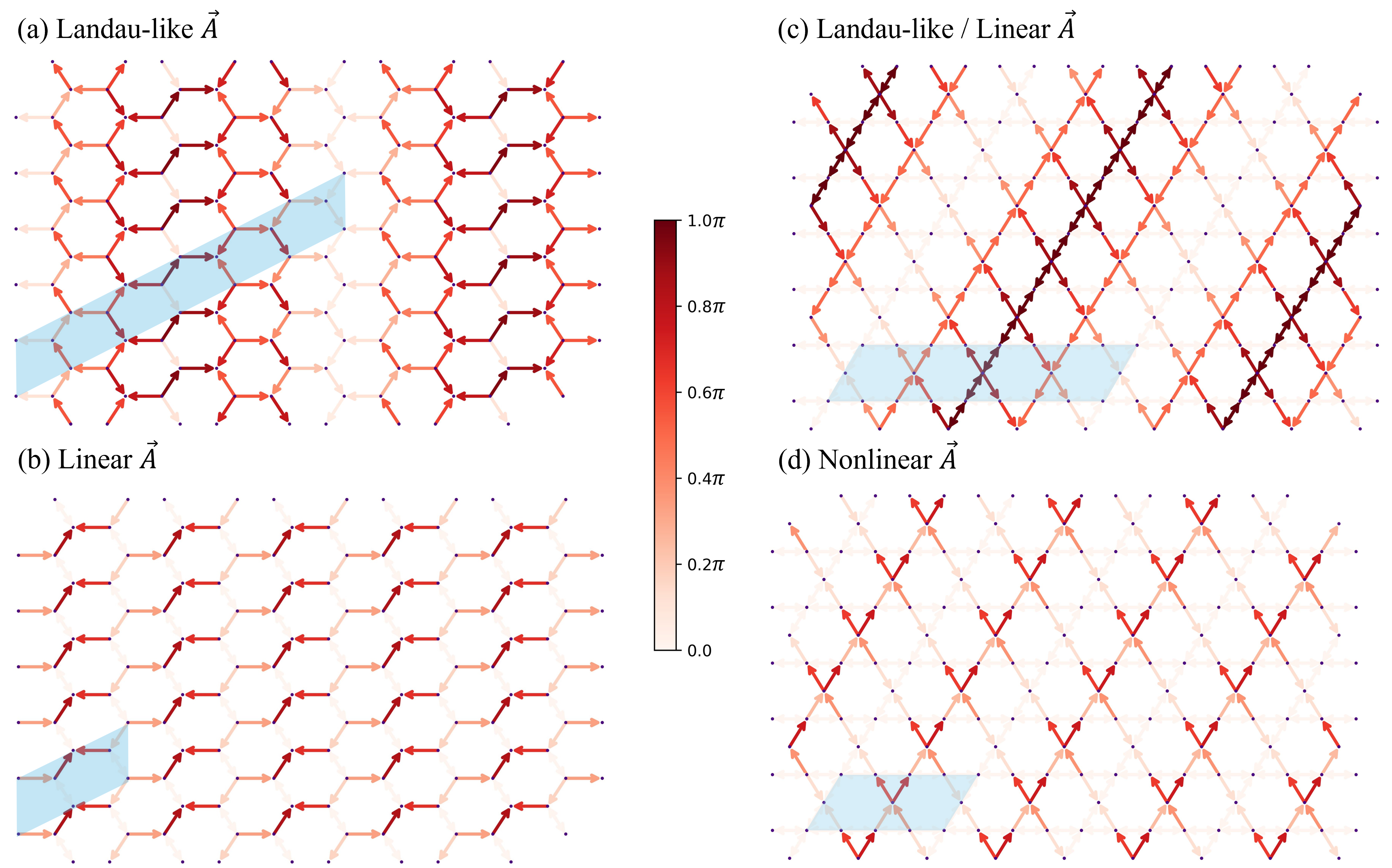}
\caption{
Even textbook-lattices need an application of non-Landau-like linear or nonlinear vector potential to minimize the size of their MUC (shaded in blue). For Honeycomb lattice with $\Phi=\frac{1}{2}$, (a) Landau-like vector potential with $A= \big(\begin{smallmatrix} 0&0\\ \frac{1}{2} &0\end{smallmatrix}\big),~T=\big(\begin{smallmatrix} 6&\\  &1\end{smallmatrix}\big)$ needed MUC three times bigger than (b) linear vector potential with $A= \big(\begin{smallmatrix} 1&0\\ \frac{1}{2} &0\end{smallmatrix}\big),~T=\big(\begin{smallmatrix} 2&\\  &1\end{smallmatrix}\big)$. For Kagome lattice with $\Phi=\frac{1}{2}$, (c) Landau-like/linear vector potential with $A= \big(\begin{smallmatrix} 0&0\\ \frac{1}{2} &0\end{smallmatrix}\big),~T=\big(\begin{smallmatrix} 4&\\  &1\end{smallmatrix}\big)$ needed MUC two times bigger than (d) nonlinear vector potential with $A= \big(\begin{smallmatrix} 0&0\\ \frac{1}{2} &0\end{smallmatrix}\big),~T=\big(\begin{smallmatrix} 2&\\  &1\end{smallmatrix}\big)$. Each figure shows the hopping network whose hopping phase (assumed $\vb{k}=0$) of the edges are depicted with a color scale and the direction of the arrow. (If the arrow is unidirectional with phase $\phi \in (0,\pi)$ depicted as the color of the arrow, the hopping in the reverse direction has the phase $2\pi - \phi$.)
}
\label{fig:Hopnetwork}
\end{figure}

\newpage

\section{Nonlinear vector potential}
The exact lower bound of the size of the TB Hamiltonian given by the flux quantization is $N_{atom} q_{\Phi}$. Linear vector potentials couldn't achieve this exact lower bound. Would we be able to achieve this exact lower bound when we remove all the constraints on the form of the vector potential $\vb{A}$? Here, we present a nonlinear (to be exact, 'can be' nonlinear) vector potential with additional gauge-matching term that can achieve this exact lower bound:
\begin{equation}\begin{split}\label{eq:Nonlinear vector potential}
& \vb{A} (\vb{r}) = \dfrac{h}{q} (V^T)^{-1} A V^{-1} \vb{r}+\nabla \Lambda (\vb{r}),\\
where~ &\Lambda (\vb{r})=-\dfrac{h}{q} ((V^T)^{-1} A T \vb{i}(\vb{r}))\cdot \vtremain*{\vb{r}}
-\dfrac{h}{2q}(\vtremain*{\vb{r}})^T (V^T)^{-1} A V^{-1}\vtremain*{\vb{r}} ,
\end{split}\end{equation}
and $A$ is a real $2\times 2$ matrix satisfying $A_{21}-A_{12}=\Phi$.

By the Theorem \ref{theorem:k-space Hamiltonian} (2), if and only if
\begin{equation}\label{eq:Condition for nonlinear}
T^T A T \in \mathbb{Z}_{2\times 2},
\end{equation}
$\textrm{H} = \iint_{BZ} d^2\vb{k} H_{\vb{k}}$ where the kernel of the Hamiltonian is
\begin{equation}\begin{split}
        H_{\vb{k}} = &-\sum_{\substack{\vb{j}\in J,\\ m=1,2,...,N_{hop}}}  t_m
~ c_{\vb{k},(\tremain*{\vb{j}+\vb{\Delta}_m},~n_m')}^{\dagger} c_{\vb{k},(\vb{j},n_m)}
    \\ &\times \exp \Bigg( 
    -i \vb{k} \cdot (VT\tfloor*{\vb{j}+\vb{\Delta}_m})
    \\ & \indent \indent
    +\pi (\tfloor*{\vb{j}+\vb{\Delta}_m})^T T^T A T (\tfloor*{\vb{j}+\vb{\Delta}_m})
    \\ & \indent \indent
    +\pi \Phi (\tremain*{\vb{j}+\vb{\Delta}_m}+V^{-1}\brcurs_{n_m'})^T \mqty(&-1\\1&) (\vb{j}+V^{-1}\brcurs_{n_m})
    \\ & \indent \indent
    +\pi \Phi (\tfloor*{\vb{j}+\vb{\Delta}_m})^T T^T \mqty(&-1\\1&) (\tremain*{\vb{j}+\vb{\Delta}_m}+V^{-1}\brcurs_{n_m'}+\vb{j}+V^{-1}\brcurs_{n_m})\Bigg) + h.c.~.
\end{split}\end{equation}

Note the symmetry of the terms: when we reverse the hopping direction by exchanging $(\tremain*{\vb{j}+\vb{\Delta}_m}+V^{-1}\brcurs_{n_m'})~\leftrightarrow~(\vb{j}+V^{-1}\brcurs_{n_m})$ and $\tfloor*{\vb{j}+\vb{\Delta}_m}~\leftrightarrow~-\tfloor*{\vb{j}+\vb{\Delta}_m}$, it becomes the phase of the Hermitian conjugate (h.c.) term.

Due to the Lemma \ref{lemma:MMT}, the equation \ref{eq:Condition for nonlinear} implies the flux quantization $\Phi |T| \in \mathbb{Z}$.

\newpage

With a simple Landau-like choice of $A$ and $T$ as
\begin{equation}\label{eq: Nonlinear A,T}
A=\mqty(0&0\\ \Phi & 0),~ T=\mqty(q_{\Phi} & \\ & 1),
\end{equation}
the Equation \ref{eq:Condition for nonlinear} is satisfied and the size of the TB Hamiltonian $N_{atom}|T|=N_{atom} q_{\Phi}$ reaches the exact lower bound given by the flux quantization. In this case, the kernel of the Hamiltonian is simplified as
\begin{equation}\begin{split}\label{Eq:Hk for nonlinear vector potential}
        H_{\vb{k}} = &-\sum_{\substack{\vb{j}\in J,\\ m=1,2,...,N_{hop}}} t_m
~ c_{\vb{k},(\tremain*{\vb{j}+\vb{\Delta}_m},~n_m')}^{\dagger} c_{\vb{k},(\vb{j},n_m)}
    \\ &\times \exp \Bigg( 
    -i \vb{k} \cdot (VT\tfloor*{\vb{j}+\vb{\Delta}_m}) 
    \\ & \indent \indent
    +\pi p_{\Phi} (\tfloor*{\vb{j}+\vb{\Delta}_m})^T \mqty(0&0\\1&0) (\tfloor*{\vb{j}+\vb{\Delta}_m})
    \\ & \indent \indent
    +\pi \Phi (\tremain*{\vb{j}+\vb{\Delta}_m}+V^{-1}\brcurs_{n_m'})^T \mqty(&-1\\1&) (\vb{j}+V^{-1}\brcurs_{n_m})
    \\ & \indent \indent
    +\pi \Phi (\tfloor*{\vb{j}+\vb{\Delta}_m})^T T^T \mqty(&-1\\1&) (\tremain*{\vb{j}+\vb{\Delta}_m}+V^{-1}\brcurs_{n_m'}+\vb{j}+V^{-1}\brcurs_{n_m})\Bigg)+h.c.~.
\end{split}\end{equation}

Since Kagome lattice has non-unity $\alpha=2$, as shown in the Table \ref{Table:Parameters for D}, when $2 \nmid p_{\Phi}$, application of nonlinear vector potential allows to get a MUC with half size of that of the case with Landau-like / linear vector potential, as shown in the Figure \ref{fig:Hopnetwork} (c,d). Figure \ref{fig:Hopnetwork} (c) shows that , with Landau-like / linear vector potential, left half of the MUC looks alike with the right half the MUC. This characteristic is quite general: other crystals with non-unity $\alpha$ also exhibit an MUC with linear vector potentials displaying $\alpha$ similar regions with $2\pi/\alpha$ phase differences. This redundancy is eliminated in the case of nonlinear vector potentials by using a proper gauge transformation with $\Lambda(\vb{r})$. Thus, the MUC and the TB Hamiltonian can be reduced by a factor of $\alpha$ when using nonlinear vector potentials.

\newpage

\section{Example: Twisted Bilayer Graphene}
TBG whose commensuration periodicity matches the moiré periodicity (moiré-TBG\label{moire-TBG}) can be represented with an odd number $t$ corresponding to the twist angle $\theta$\cite{tbg}:
\begin{equation}\begin{split}
\theta=\cos^{-1} \dfrac{3t^2-1}{3t^2+1},~D=\dfrac{2}{3(3t^2+1)}\mqty(1& \\ 1&3),~N_{atom}=3t^2+1,
\\
V=\mqty(\vb{e}_1 & \vb{e}_2)\mqty(-1-t & -2t \\ 2t & -1+t) = \mqty(\vb{e}_1' & \vb{e}_2') \mqty(1-t & -2t \\ 2t & 1+t),
\end{split}\end{equation}
where $\vb{e}_1$ and $\vb{e}_2 = R_{\pi /3} \vb{e}_1$ are primitive vectors of the first graphene, and $\vb{e}_1' = R_{\theta} \vb{e}_1$ and $\vb{e}_2' = R_{\theta} \vb{e}_2$ are primitive vectors of the second graphene which was rotated anti-clockwise with the twisted angle $\theta$.

Since moiré-TBG has non-unity $\beta_1 \gamma =3$ and big $\alpha=\frac{3t^2+1}{2}$, the size of the TB Hamiltonian differs greatly depending on which form of vector potential we are using. Compared to Landau-like vector potential, linear vector potential can reduce the size with a factor of 3, when $3\nmid p_{\Phi}$. More importantly, as the twisted angle $\theta$ decreases to $0$, the lower bound of the size of the TB Hamiltonian with linear vector potentials increases as $\propto 1/ \theta^4$, while the exact lower bound that can be achieved with nonlinear vector potential increases as $\propto 1/ \theta^2$, as shown in the Table \ref{Table:TBG}. Thus, nonlinear vector potentials make previously intractable numerical simulations feasible. As shown in the Figure \ref{fig:TBG} for $\Phi=\frac{5}{7}$ case, with nonlinear vector potentials, moiré-TBG model near the magic-angle TBG (MATBG) is tractable to retrieve 10 eigenpairs (both eigenvalues and eigenvectors) near the Fermi Energy $E_F$ only with the use of nonlinear vector potentials.

Also, it's a lot easier to apply the moiré reconstruction for nonlinear vector potentials. While linear vector potentials required $rank(S_{hop})=2$ and $D^T A T \in \mathbb{Z}_{2\times2}$ for constructing MUC, nonlinear vector potentials required $T^T A T \in \mathbb{Z}_{2\times2}$ which only depends on $\Phi$. Hence, with linear vector potentials, newly relocated atomic positions should be commensurate with the magnetic field to get a TB Hamiltonian, and, even when we relocate atoms nicely to commensurate positions, we have to enlarge the MUC to a huge amount in order to match the updated $D^T A T \in \mathbb{Z}_{2\times2}$ condition. However, with nonlinear vector potentials, we can use the original MUC even with incommensurate atomic positions with $rank(S_{hop})>2$.

\newpage

\begin{table}[!h]
\caption{Size of the Tight-binding Hamiltonian $N_{atom} |T|$}
\begin{adjustbox}{center}
\label{Table:TBG}
\begin{tabular}{ |>{\centering\arraybackslash}m{4.5cm}||>{\centering\arraybackslash}m{4cm}|>{\centering\arraybackslash}m{4cm}|>{\centering\arraybackslash}m{4cm}|  }
 \hline
 \multirow{2}{*}{Case}& Landau-like $\vb{A}$ &
 Linear $\vb{A}$ &
 Nonlinear $\vb{A}$ \\
 & (Eqns. \ref{eq:Linear vector potential}, \ref{eq: Landau-like A}, \ref{eq: Landau-like T}) &
 (Eqns. \ref{eq:Linear vector potential}, \ref{eq: Linear A,T}) &
 (Eqns. \ref{eq:Nonlinear vector potential}, \ref{eq: Nonlinear A,T})
 \\
 \hline
 $3 \nmid p_{\Phi},~g_1 \defeq (p_{\Phi}, \frac{3t^2+1}{2})$&  $\frac{3(3t^2+1)^2}{2g_1} q_{\Phi} \approx \frac{24}{g_1} \frac{q_\Phi}{\theta^4}$  & $\frac{(3t^2+1)^2}{2g_1} q_{\Phi} \approx \frac{8}{g_1} \frac{q_\Phi}{\theta^4}$   &\multirow{2}{*}{$(3t^2+1) q_{\Phi} \approx 4 \frac{q_\Phi}{\theta^2}$}\\
 \cline{1-3}
 $3 \mid p_{\Phi},~g_2 \defeq (\frac{p_{\Phi}}{3},\frac{3t^2+1}{2})$&   $\frac{(3t^2+1)^2}{2g_2} q_{\Phi} \approx \frac{8}{g_2} \frac{q_\Phi}{\theta^4}$  & $\frac{(3t^2+1)^2}{2g_2} q_{\Phi} \approx \frac{8}{g_2} \frac{q_\Phi}{\theta^4}$   & \\
 \hline
\end{tabular}
\end{adjustbox}
\end{table}

\begin{figure}[H]
\centering \includegraphics[width=\columnwidth]{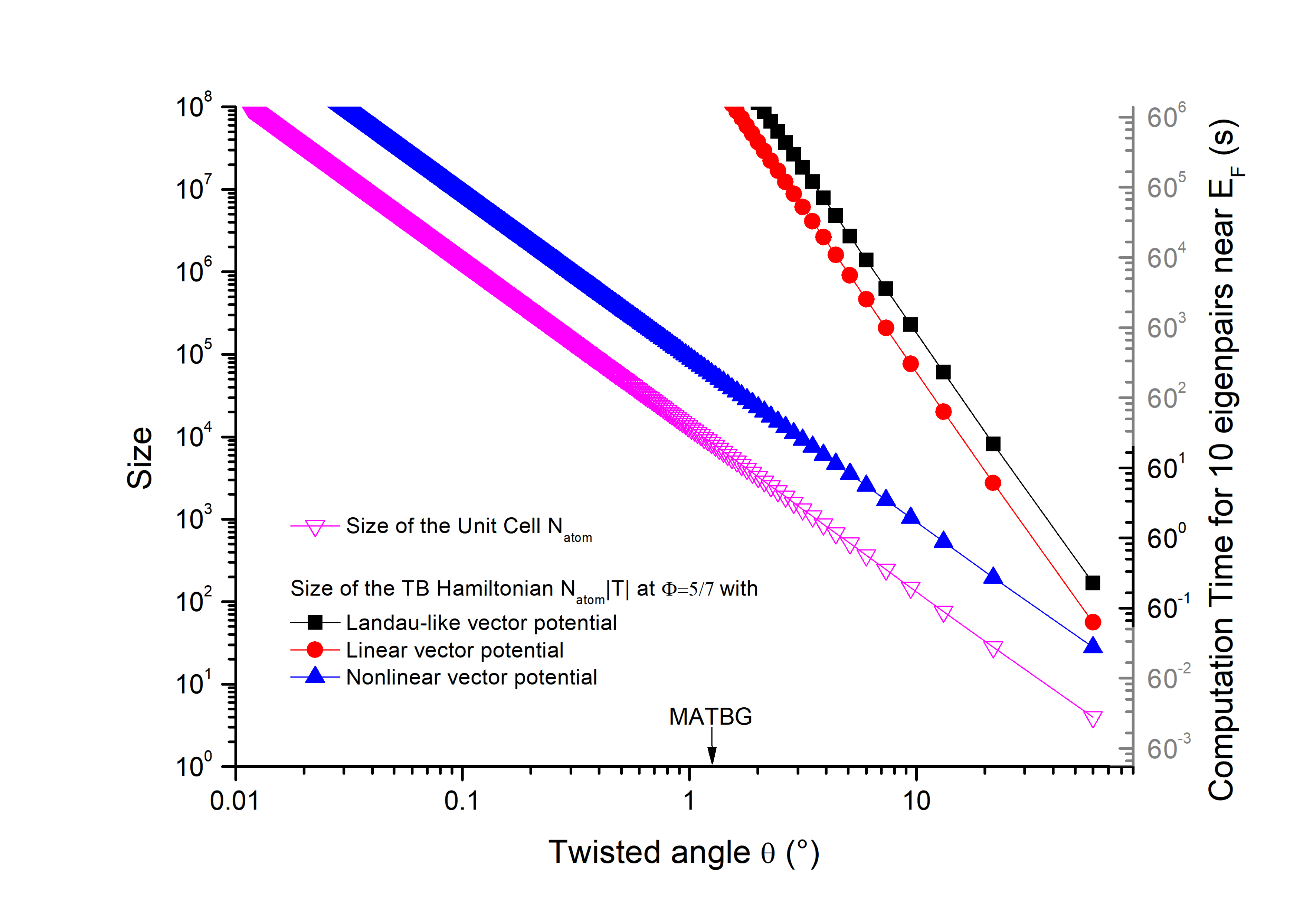}
\caption{
Size of the TB Hamiltonian of the moiré-TBG with uniform magnetic flux $\Phi=\frac{5}{7}$. As the twisted angle $\theta$ decreases to 0, the size of the TB Hamiltonian is $q_{\Phi}=7$ times size of the UC in the case of nonlinear vector potential, while linear vector potentials such as Landau gauge requires quadratically larger TB Hamiltonian. Y-axis on the right shows the computation time for retrieving 10 eigenpairs near the Fermi energy (E=0). Using a laptop with i7-8550U CPU and 16GB RAM, the computation time was measured for the Python scipy.sparse.linalg.eigs function applied to moiré-TBG Hamiltonians, where all hoppings with distances less than $\frac{4}{\sqrt{3}} |\vb{e}_1|$ were considered\cite{Koshino}. The transparent area on the right Y-axis indicates the region where the computation time is estimated through extrapolation. The graph shows that moiré-TBG near MATBG is only tractable with the use of nonlinear vector potentials.
}
\label{fig:TBG}
\end{figure}

\section{Conclusions}
The general formula for the lower bound of the size of the TB Hamiltonian is presented with following constraints on the choice of the vector potential: (1) Landau-like potential, (2) Linear potential, and (3) No constraints. With constraints ((1) and (2)), the hopping trajectories were required to be commensurate and size of the TB Hamiltonian couldn't be always reduced to the exact lower bound given by the flux quantization condition. Linear potentials allowed to have a smaller lower bound compared to that of Landau-like potentials, but it couldn't be able to reach the aforementioned exact lower bound. However, with given general form of nonlinear vector potential with additional gauge-matching term, the size of the TB Hamiltonian could be reduced to this exact lower bound and constraints on hopping trajectories were also removed. While difference between (1) and (2), (2) and (3) were typically shown with honeycomb and kagome lattices, the case of moiré-TBG have shown that the application of nonlinear vector potential can reduce the size of TB Hamiltonian to the square root of that with linear vector potentials, which makes a huge difference in the tractability of band calculations. This work enhances the application of the TB method in two ways: 1) it allows both commensurate and incommensurate atomic reconstructions without altering the MUC, and 2) it ensures that the size of the MUC is only larger by a constant factor than that of the UC by a constant factor defined by the magnetic flux, making previously intractable problems tractable.

\begin{remark}
Check out our \href{https://github.com/expo1221/NonLandauTightBinding}{Github repository} for an interactive explanation. Discover the fastest tight-binding method with nonlinear vector potentials!
\end{remark}

\section*{Acknowledgement}
We would like to express our heartfelt appreciation to Jaejun Yu and Gye-Seon Lee for their invaluable support and guidance throughout this research. We also thank Seokhyun Song and Taeho Kim for their collaboration and insightful feedback.

	\bibliographystyle{ieeetr}
        \bibliography{ref}

\newpage
\section{Appendix A: Mathematical notations}\label{Appendix: Mathematical Notations}
\begin{notation}
Given a finite set of vectors in $\mathbb{R}^2$, $S=\{\vb{v_1},\vb{v_2},...,\vb{v_n}\}$,

$span(S)\defeq\{\sum_{i=1}^{n} c_i \vb{v_i}~|~^{\forall} i=1,2,...,n,~c_i\in \mathbb{Z}\}.$

Note that $span(S)$ is a vector space with $\mathbb{Z}$ as its scalar field. $\mathbb{R}^2$ is a infinite-dimensional vector space with the scalar field $\mathbb{Z}$ and $span(S)$ is its subspace.

$rank(S)\defeq$ (the dimension of $span(S)$) $\leq n$
\end{notation}

\begin{notation}
    Given $r\in \mathbb{R}$, $\floor*{r}\defeq$ the integral part of $r$ and $\{r\}\defeq$ the fractional part of $r$ $=r-\floor*{r}$.
    
    Given $\vb{r}=\mqty(r_1 \\ r_2)\in \mathbb{R}^2$, $\floor*{\vb{r}} \defeq \mqty(\lfloor r_1 \rfloor \\ \lfloor r_2 \rfloor) \in \mathbb{Z}^2~and~\{\vb{r}\}\defeq\mqty(\{r_1\} \\ \{r_2\})$.

    Furthermore, for an invertible $2\times 2$ matrix $L$, $\lfloor \vb{r} \rfloor _{L} \defeq \lfloor L^{-1} \vb{r} \rfloor$ and $\left\{ \vb{r} \right\}_{L} \defeq \vb{r} - L \lfloor \vb{r} \rfloor _{L}$.

    Given $a,b\in\mathbb{Z},~n\in\mathbb{N}$, $a\equiv b~(mod~n)$ denotes that $a$ and $b$ have the same remainder when they are divided by $n$, i.e. $\{ \frac{a-b}{n} \} = 0$ and

    $n|a$ denotes that $a$ is a multiple of $n$, i.e. $\{ \frac{a}{n} \} = 0$.
\end{notation}

\begin{notation}
    Given a matrix $M$ and a value $v\in \mathbb{R}$,
    
    $|M| \defeq$ the determinant of the matrix $M$ and

    $abs(v) \defeq$ the absolute value of the value $v$.
\end{notation}

\begin{notation}
    Given $r=\frac{p}{q}\in \mathbb{Q}$ while $q\in \mathbb{N},~p\in\mathbb{Z}$,

    $q_r\defeq$ the denominator $q$ of the rational number $r$ and

    $p_r\defeq$ the numerator $p$ of the rational number $r$.

    If $r=0$, $q_r \defeq 1$ and $p_r \defeq 0$.
\end{notation}

\begin{notation}
    Given $a_1,~a_2,~...,~a_n\in\mathbb{Z}$,

    $(a_1,~a_2,~,...,~a_n)\defeq$ the greatest common divisor (GCD) of these integers.

    Note that the GCD value is always $\geq 0$ while $(0,a)\defeq abs(a)$ and $(0,0) \defeq 0$.

    Given $r_1,~r_2,~...,~r_n\in\mathbb{Q}$,
    
    let $N\in\mathbb{N}$ to be the least common multiple (LCM) of $q_{r_1},...,q_{r_n}$ such that $^{\forall} i,~r_i N \in \mathbb{Z}$. 
    
    Then,
    $(r_1,~r_2,~,...,~r_n)\defeq \frac{(r_1 N,~r_2 N,...,r_n N)}{N}$.
\end{notation}

\begin{notation}
    Given $a,~b\in\mathbb{Q}$,

    $(\inv_b a,~\inv_a b) \defeq$ integer solution $(x,y)$ of the Diophantine's equation $ax+by=(a,b)$

    i.e. $\inv_b a$ and $\inv_a b$ are the Bézout coefficients of $a$ and $b$.

    Since the solution is not unique, the formula $f$ contains some ambiguity using these definitions as:
    
    $f(\inv_b a)$ represents the set $\{f(x+n\frac{b}{(a,b)})~|~n\in\mathbb{Z}\}$ and

    $f(\inv_b a,\inv_a b)$ represents the set $\{f(x+n\frac{b}{(a,b)},y-n\frac{a}{(a,b)})~|~n\in\mathbb{Z})\}$.

    Note that $a \in \inv_b (\inv_b a)$.
\end{notation}

\newpage

\begin{notation}

    Given an $n\times m$ matrix $M$,

    $M_{ij}\defeq$ an (i,j) entry of the matrix M $(1\leq i \leq n, 1\leq j \leq m)$.    
    
    If this matrix is rational, i.e. $M\in \mathbb{Q}_{n\times m}$,

    $M_{i\ast}\defeq (M_{i1},M_{i2},...,M_{im}),~
    M_{\ast j}\defeq (M_{1j},M_{2j},...,M_{nj})$,
    
    $\gcd M \defeq (M_{11},M_{12},...,M_{1m},M_{21},...,M_{nm})$, 
    $q_M\defeq q_{\gcd M}$, $p_M\defeq p_{\gcd M}$, and
    
    $\Tilde{M} \defeq \frac{1}{\gcd M} M \in \mathbb{Z}_{n\times m}$, where $\gcd \Tilde{M}=1$.
\end{notation}

\begin{notation}
    Given $a,~b\in\mathbb{Z}$,

    $\erase_a b \defeq \max \{n \in \mathbb{N} : n|b~and~(n,a)=1\}$.

    For example, when we erase the prime factors $2,3,5$ of $120=2^3 \times 3 \times 5$ from $-1386=-2\times 3^2\times 7\times 11$, $\erase_{120} (-1386)=7\times 11=77$. 
\end{notation}

\newpage

\section{Appendix B: Theorems and Lemmas}

\begin{theorem}\label{theorem:k-space Hamiltonian}
Real-space Hamiltonian is given as the Equation \ref{eq:Real-space Hamiltonian}.
\begin{equation} \nonumber
    \textrm{H} \defeq -\sum_{\substack{\vb{i}\in\mathbb{Z}^2,~\vb{j}\in J,\\ m=1,2,...,N_{hop}}} t_m
~\exp( i \frac{q}{\hbar} \int_{V(T\vb{i}+\vb{j})+\brcurs_{n_m}}^{V(T\vb{i}+\vb{j}+\vb{\Delta}_m)+\brcurs_{n_m'}} \vb{A} \cdot d\vb{r} )
    ~c_{\vb{i}+\tfloor{\vb{j}+\vb{\Delta_m}},~(\tremain*{\vb{j}+\vb{\Delta}_m},~n_m')}^{\dag} ~c_{\vb{i},~(\vb{j},~n_m)}+h.c.
\end{equation}
\bigskip
(1) If the vector potential is given to be linear as the Equation \ref{eq:Linear vector potential},
\begin{equation}\nonumber
\vb{A} (\vb{r}) = \dfrac{h}{q} (V^T)^{-1} A V^{-1} \vb{r},
\end{equation}
if and only if
\begin{equation}\nonumber
^{\forall} \vb{R} \in S_{hop},~ T^T A^T \vb{R} \in \mathbb{Z}^2,
\end{equation}
$\textrm{H} = \iint_{BZ} d^2\vb{k} H_{\vb{k}}$ where the kernel of the Hamiltonian is
\begin{equation}\begin{split}\nonumber
        H_{\vb{k}} = -\sum_{\substack{\vb{j}\in J,\\ m=1,2,...,N_{hop}}} t_m
~ &\exp( -i (VT\tfloor*{\vb{j}+\vb{\Delta}_m}) \cdot \vb{k}+2\pi i \int_{\vb{j}+V^{-1} \brcurs_{n_m}}^{\vb{j}+\vb{\Delta}_m+V^{-1}\brcurs_{n_m'}} (A \vb{R}) \cdot d\vb{R} )
    \\ \times & ~c_{\vb{k},~(\tremain*{\vb{j}+\vb{\Delta}_m},~n_m')}^{\dag} ~c_{\vb{k},~(\vb{j},~n_m)}+h.c.~.
\end{split}\end{equation}
\bigskip
(2) If the vector potential is given as the Equation \ref{eq:Nonlinear vector potential},
\begin{equation}\begin{split}\nonumber
& \vb{A} (\vb{r}) = \dfrac{h}{q} (V^T)^{-1} A V^{-1} \vb{r}+\nabla \Lambda (\vb{r}),\\
where~ &\Lambda (\vb{r})=-\dfrac{h}{q} ((V^T)^{-1} A T \vb{i}(\vb{r}))\cdot \vtremain*{\vb{r}}
-\dfrac{h}{2q}(\vtremain*{\vb{r}})^T (V^T)^{-1} A V^{-1}\vtremain*{\vb{r}} ,
\end{split}\end{equation}
if and only if
\begin{equation}\nonumber
T^T A T \in \mathbb{Z}_{2\times 2},
\end{equation}
$\textrm{H} = \iint_{BZ} d^2\vb{k} H_{\vb{k}}$ where the kernel of the Hamiltonian is
\begin{equation}\begin{split}\nonumber
        H_{\vb{k}} = &-\sum_{\substack{\vb{j}\in J,\\ m=1,2,...,N_{hop}}} t_m
~ c_{\vb{k},(\tremain*{\vb{j}+\vb{\Delta}_m},~n_m')}^{\dagger} c_{\vb{k},(\vb{j},n_m)}
    \\ &\times \exp \Bigg( 
    -i \vb{k} \cdot (VT\tfloor*{\vb{j}+\vb{\Delta}_m})
    \\ & \indent \indent
    +\pi (\tfloor*{\vb{j}+\vb{\Delta}_m})^T T^T \dfrac{A+A^T}{2} T (\tfloor*{\vb{j}+\vb{\Delta}_m})
    \\ & \indent \indent
    +2 \pi (\tremain*{\vb{j}+\vb{\Delta}_m}+V^{-1}\brcurs_{n_m'})^T \dfrac{A-A^T}{2} (\vb{j}+V^{-1}\brcurs_{n_m})
    \\ & \indent \indent
    +2 \pi (\tfloor*{\vb{j}+\vb{\Delta}_m})^T T^T \dfrac{A-A^T}{2} (\tremain*{\vb{j}+\vb{\Delta}_m}+V^{-1}\brcurs_{n_m'}+\vb{j}+V^{-1}\brcurs_{n_m})\Bigg)+h.c.~.
\end{split}\end{equation}
\end{theorem}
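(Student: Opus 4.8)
Both claims follow from the same three moves. First I evaluate the Peierls phase of a single hopping in closed form: since $\vb{A}$ is affine in $\vb{r}$, the line integral of $\vb{A}$ over the straight segment joining the two atoms is a quadratic polynomial in the endpoints, via the elementary identity $\int_0^1\big(M(\vb{p}+s\vb{w})\big)\cdot\vb{w}\,ds=(M\vb{p})\cdot\vb{w}+\tfrac12(M\vb{w})\cdot\vb{w}$, and the pure-gauge piece $\nabla\Lambda$ contributes only the boundary difference $\Lambda(\vb{r}_2)-\Lambda(\vb{r}_1)$. Second I substitute the Fourier transform of Equation \ref{eq:Fourier transform of operators} for every $c$ and $c^{\dagger}$ in Equation \ref{eq:Real-space Hamiltonian}. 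Third I carry out $\sum_{\vb{i}\in\mathbb Z^2}$: the two operator Fourier factors supply $e^{i(VT\vb{i})\cdot(\vb{k}-\vb{k}')}$, so — the magnetic lattice having covolume $|V||T|$ — this sum equals $\tfrac{(2\pi)^2}{|V||T|}\sum_{\vb{g}}\delta^2(\vb{k}-\vb{k}'-\vb{g})$ over reciprocal vectors $\vb{g}$; together with the two normalisations $\tfrac{\sqrt{|V||T|}}{2\pi}$ this produces exactly $\textrm{H}=\iint_{BZ}d^2\vb{k}\,H_{\vb{k}}$ with the stated kernel \emph{if and only if}, for every hopping, the leftover $\vb{i}$-dependence of the Peierls phase is a multiple of $2\pi$. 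If instead an edge retains a factor $e^{2\pi i\,\vb{i}\cdot\vb{c}}$ with $\vb{c}\notin\mathbb Z^2$, the $\vb{i}$-sum forces $\vb{k}'=\vb{k}+2\pi(V^{T})^{-1}(T^{T})^{-1}\vb{c}$ modulo reciprocal vectors, which is not itself a reciprocal vector of the magnetic lattice, so $\textrm{H}$ acquires matrix elements off-diagonal in the magnetic crystal momentum and is not a direct integral of kernels. Thus the whole proof is (a) isolating that residual $\vb{i}$-phase and (b) matching the remaining exponent terms to the quoted $H_{\vb{k}}$.

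\textbf{Part (1).} Change variables $\vb{r}=V\vb{R}$, so $\tfrac{q}{\hbar}\int\vb{A}\cdot d\vb{r}=2\pi\int(A\vb{R})\cdot d\vb{R}$, with straight-line endpoints $\vb{R}_1=T\vb{i}+\vb{j}+V^{-1}\brcurs_{n_m}$ and $\vb{R}_2=\vb{R}_1+\vb{R}$, where $\vb{R}\defeq\vb{\Delta}_m+V^{-1}(\brcurs_{n_m'}-\brcurs_{n_m})\in S_{hop}$. The closed form $(A\vb{R}_1)\cdot\vb{R}+\tfrac12(A\vb{R})\cdot\vb{R}$ is additive in $\vb{R}_1$: the $T\vb{i}$ part is $2\pi\,\vb{i}\cdot(T^{T}A^{T}\vb{R})$, and the remainder equals $2\pi\int_{\vb{j}+V^{-1}\brcurs_{n_m}}^{\vb{j}+\vb{\Delta}_m+V^{-1}\brcurs_{n_m'}}(A\vb{R})\cdot d\vb{R}$, precisely the exponent of the claimed kernel (the operator phase also contributes $e^{-i(VT\tfloor*{\vb{j}+\vb{\Delta}_m})\cdot\vb{k}'}$, which is $e^{-i\vb{k}\cdot(VT\tfloor*{\vb{j}+\vb{\Delta}_m})}$ on the diagonal). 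Hence the $\vb{i}$-sum collapses correctly iff $T^{T}A^{T}\vb{R}\in\mathbb Z^2$ for all $\vb{R}\in S_{hop}$, which is Equation \ref{cond-linear}, and its failure is the obstruction described above.

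\textbf{Part (2).} Now I additionally carry $\tfrac{q}{\hbar}\big(\Lambda(\vb{r}_2)-\Lambda(\vb{r}_1)\big)$, which is where the particular form of $\Lambda$ earns its keep. Using the elementary identity $\floor*{\vb{n}+\vb{z}}=\vb{n}+\floor*{\vb{z}}$ for $\vb{n}\in\mathbb Z^2$ (applied in both the $V$- and $T$-frames) together with $\floor*{V^{-1}\brcurs_n}=\vb{0}$ (atom offsets lie in the fundamental cell), the hopping endpoints obey $\vb{i}(\vb{r}_1)=\vb{i}$, $\vb{i}(\vb{r}_2)=\vb{i}+\vb{n}$ with $\vb{n}\defeq\tfloor*{\vb{j}+\vb{\Delta}_m}$, $V^{-1}\vtremain*{\vb{r}_1}=\vb{j}+V^{-1}\brcurs_{n_m}=:\vb{a}$, and $V^{-1}\vtremain*{\vb{r}_2}=\tremain*{\vb{j}+\vb{\Delta}_m}+V^{-1}\brcurs_{n_m'}=:\vb{b}$. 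Writing $\vb{R}_1=T\vb{i}+\vb{a}$, $\vb{R}_2=T(\vb{i}+\vb{n})+\vb{b}$ and splitting $A=A_s+A_a$ into symmetric and antisymmetric parts, the linear Peierls phase equals $2\pi\big(\tfrac12\vb{R}_2^{T}A_s\vb{R}_2-\tfrac12\vb{R}_1^{T}A_s\vb{R}_1+\vb{R}_2^{T}A_a\vb{R}_1\big)$ while $\tfrac{q}{\hbar}\big(\Lambda(\vb{r}_2)-\Lambda(\vb{r}_1)\big)=2\pi\big(-(\vb{i}+\vb{n})^{T}T^{T}A^{T}\vb{b}+\vb{i}^{T}T^{T}A^{T}\vb{a}-\tfrac12\vb{b}^{T}A\vb{b}+\tfrac12\vb{a}^{T}A\vb{a}\big)$. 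Expanding and collecting: all $A_s$-cross terms of the form $\vb{i}^{T}T^{T}A_s\vb{a}$ or $(\vb{i}+\vb{n})^{T}T^{T}A_s\vb{b}$ and the scalars $\tfrac12\vb{a}^{T}A_s\vb{a},\tfrac12\vb{b}^{T}A_s\vb{b}$ cancel against the $\Lambda$-terms; the $\vb{i}$-linear $A_a$-cross terms also cancel, leaving $\vb{n}^{T}T^{T}A_a(\vb{a}+\vb{b})$; this together with the surviving $\vb{b}^{T}A_a\vb{a}$ and $\tfrac12\vb{n}^{T}T^{T}A_sT\vb{n}$ reproduces exactly the two $\tfrac{A-A^{T}}{2}$-bilinears and the $T^{T}\tfrac{A+A^{T}}{2}T$-quadratic term of the stated $H_{\vb{k}}$; and the \emph{only} surviving term still carrying the magnetic index is $2\pi\,\vb{i}^{T}(T^{T}A^{T}T)\vb{n}$. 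This is a multiple of $2\pi$ for all $\vb{i}\in\mathbb Z^2$ and all $\vb{n}$ iff $T^{T}A^{T}T\in\mathbb Z_{2\times2}$, i.e. $T^{T}AT\in\mathbb Z_{2\times2}$, which by Lemma \ref{lemma:MMT} forces $\Phi|T|\in\mathbb Z$; the converse is the same obstruction argument as in Part (1).

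\textbf{Main obstacle.} The difficulty is organisational, not conceptual. The crux is the term-by-term verification in Part (2) that $\tfrac12\vb{R}_2^{T}A_s\vb{R}_2-\tfrac12\vb{R}_1^{T}A_s\vb{R}_1+\vb{R}_2^{T}A_a\vb{R}_1+\tfrac{q}{2\pi\hbar}\big(\Lambda(\vb{r}_2)-\Lambda(\vb{r}_1)\big)$ contains nothing beyond (i) terms that cancel, (ii) the bilinear and quadratic terms of the quoted kernel, and (iii) the single $\vb{i}^{T}T^{T}A^{T}T\vb{n}$ remainder — this is exactly the calculation certifying that the gauge term $\Lambda$ of Equation \ref{eq:Nonlinear vector potential} was chosen correctly. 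A secondary point: for the ``only if'' directions one must check the obstructing phase is not accidentally annihilated by another hopping term, which holds because distinct edges of the magnetic unit cell yield distinguishable $c^{\dagger}c$ bilinears and the hopping network is connected, so $\textrm{H}$ genuinely mixes inequivalent momenta whenever the condition fails.
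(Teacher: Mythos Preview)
Your proposal is correct and follows essentially the same approach as the paper: evaluate the Peierls phase in closed form, Fourier-transform the operators, and isolate the residual $\vb{i}$-dependent phase whose integrality gives the stated condition. The only notable difference is organisational: the paper introduces a braket shorthand $\braket{X|Y}\defeq\ket{X}\cdot(A\ket{Y})$ with $\ket{I},\ket{R_i},\ket{R_f},\ket{\Delta}$ to streamline the Part (2) bookkeeping, whereas you carry out the identical expansion directly in matrix form via the decomposition $\tfrac12\vb{R}_2^{T}A_s\vb{R}_2-\tfrac12\vb{R}_1^{T}A_s\vb{R}_1+\vb{R}_2^{T}A_a\vb{R}_1$ and the variables $\vb{a},\vb{b},\vb{n}$.
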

\begin{proof}
(1)
Since $\vb{A}(VT\vb{i}+\vb{r})=\frac{h}{q} (V^T)^{-1} A T\vb{i}+\vb{A}(\vb{r}),$
    \begin{equation}\begin{split}\nonumber
        \theta_{lin}^{\vb{r}}\defeq & \frac{q}{\hbar} \int_{V(T\vb{i}+\vb{j})+\brcurs_{n_m}}^{V(T\vb{i}+\vb{j}+\vb{\Delta}_m)+\brcurs_{n_m'}} \vb{A} \cdot d\vb{r}
        \\
        = &\frac{q}{\hbar} \int_{V\vb{j}+\brcurs_{n_m}}^{V(\vb{j}+\vb{\Delta}_m)+\brcurs_{n_m'}} \vb{A} \cdot d\vb{r} + i (VT\vb{i})\cdot ( 2 \pi (V^T)^{-1} A^T (\vb{\Delta}_m + V^{-1} (\brcurs_{n_m'}-\brcurs_{n_m}))).
    \end{split}\end{equation}

According to the Equation \ref{eq:Fourier transform of operators},
    \begin{equation}\begin{split}\nonumber
        \textrm{H} = & -\sum_{\substack{~\vb{j}\in J,\\ m=1,2,...,N_{hop}}} \iint_{BZ} \iint_{BZ} d^2\vb{k}d^2\vb{k}'
    ~t_m c_{\vb{k},(\tremain*{\vb{j}+\vb{\Delta}_m},~n_m')}^{\dagger} c_{\vb{k}',(\vb{j},n_m)}
    \\ &\times \exp( 2\pi i  \int_{V\vb{j}+ \brcurs_{n_m}}^{V(\vb{j}+\vb{\Delta}_m)+\brcurs_{n_m'}} (AV^{-1}\vb{r})\cdot (V^{-1}d\vb{r})
    -i(VT\tfloor*{\vb{j}+\vb{\Delta}_m})\cdot \vb{k})
    \\ &\times
    \dfrac{|V||T|}{4\pi^2} \sum_{\vb{i}\in\mathbb{Z}^2} \exp(i (VT\vb{i}) \cdot (\vb{k}'-\vb{k}+2\pi (V^T)^{-1} A^T (\vb{\Delta}_m+V^{-1} (\brcurs_{n_m'}-\brcurs_{n_m}))))+h.c.
    \\ =& -\sum_{\substack{~\vb{j}\in J,\\ m=1,2,...,N_{hop}}} \iint_{BZ} \iint_{BZ} d^2\vb{k}d^2\vb{k}'
    ~t_m c_{\vb{k},(\tremain*{\vb{j}+\vb{\Delta}_m},~n_m')}^{\dagger} c_{\vb{k}',(\vb{j},n_m)}
    \\&\times\exp( 2\pi i \int_{\vb{j}+V^{-1} \brcurs_{n_m}}^{\vb{j}+\vb{\Delta}_m+V^{-1}\brcurs_{n_m'}} (A\vb{R}) \cdot d\vb{R}
    -i(VT\tfloor*{\vb{j}+\vb{\Delta}_m})\cdot \vb{k})
    \\ &\times\delta(\vb{k}'-\vb{k}+2\pi (V^T)^{-1} A^T (\vb{\Delta}_m+V^{-1} (\brcurs_{n_m'}-\brcurs_{n_m})))+h.c.
    \\=& -\iint_{BZ} d^2\vb{k} \sum_{\substack{\vb{j}\in J,\\ m=1,2,...,N_{hop}}} 
    ~t_m c_{\vb{k},(\tremain*{\vb{j}+\vb{\Delta}_m},~n_m')}^{\dagger} c_{\vb{k}-2\pi (V^T)^{-1} A^T (\vb{\Delta}_m+V^{-1} (\brcurs_{n_m'}-\brcurs_{n_m})),(\vb{j},n_m)}
    \\
    &\times\exp( 2\pi i \int_{\vb{j}+V^{-1} \brcurs_{n_m}}^{\vb{j}+\vb{\Delta}_m+V^{-1}\brcurs_{n_m'}} (A\vb{R}) \cdot d\vb{R}
    -i(VT\tfloor*{\vb{j}+\vb{\Delta}_m})\cdot \vb{k})+h.c.
    \end{split}\end{equation}

    If and only if
    \begin{equation}\begin{split}\nonumber
        & ^{\forall} m,~ 2\pi (V^T)^{-1} A^T (\vb{\Delta}_m+V^{-1} (\brcurs_{n_m'}-\brcurs_{n_m})) \in 2\pi (V^T)^{-1} (T^T)^{-1}\times \mathbb{Z}^2
        \\ \Leftrightarrow &~ ^{\forall} \vb{R} \in S_{hop},~ T^T A^T \vb{R} \in \mathbb{Z}^2
    \end{split}\end{equation}
    holds, since there is no $\vb{k}$-mixing term in the Hamiltonian, the Hamiltonian is simplified as $\textrm{H} = \iint_{BZ} d^2\vb{k} H_{\vb{k}}$ where the kernel of the Hamiltonian $\textrm{H}_{\vb{k}}$ is
    \begin{equation}\begin{split}\nonumber
        H_{\vb{k}} = -\sum_{\substack{\vb{j}\in J,\\ m=1,2,...,N_{hop}}} t_m
~ &\exp( -i (VT\tfloor*{\vb{j}+\vb{\Delta}_m}) \cdot \vb{k}+2\pi i \int_{\vb{j}+V^{-1} \brcurs_{n_m}}^{\vb{j}+\vb{\Delta}_m+V^{-1}\brcurs_{n_m'}} (A \vb{R}) \cdot d\vb{R} )
    \\ \times & ~c_{\vb{k},~(\tremain*{\vb{j}+\vb{\Delta}_m},~n_m')}^{\dag} ~c_{\vb{k},~(\vb{j},~n_m)}+h.c.~.
\end{split}\end{equation}

For further analysis, let's define the term inside the $\exp (\cdot)$ as $i\theta_{lin}^{\vb{k}}$.

\bigskip

(2)
    Let's divide $A$ into symmetric and antisymmetric parts.
    \begin{equation}\nonumber
    A_{sym} \defeq \dfrac{1}{2} (A+A^T),~ A_{asym} \defeq \dfrac{1}{2} (A-A^T), A=A_{sym}+A_{asym}
    \end{equation}

    Also, for simplicity, we define the braket notation as
    \begin{equation}\begin{split}\nonumber
        \braket{X|Y} &\defeq \ket{X} \cdot (A\ket{Y}),
        \\
        \braket{X|Y}_{sym} &\defeq \ket{X} \cdot (A_{sym}\ket{Y}),
        \\
        \braket{X|Y}_{asym} &\defeq \ket{X} \cdot (A_{asym}\ket{Y}),
        \\
        \ket{X+Y} & \defeq \ket{X}+\ket{Y}
        \\
        \ket{I} &\defeq T \vb{i},
        \\
        \ket{R_i} &\defeq \vb{j}+V^{-1} \brcurs_{n_m},
        \\
        \ket{R_f} &\defeq \tremain*{\vb{j}+\vb{\Delta}_m}+V^{-1} \brcurs_{n_m'},
        \\
        \ket{\Delta} &\defeq T \tfloor*{\vb{j}+\vb{\Delta}_m}.
    \end{split}\end{equation}
    
    Then,
    \begin{equation}\begin{split}\nonumber
        \dfrac{\theta_{non}^{\vb{r}}}{2\pi} \defeq & \frac{q}{h} \int_{V(T\vb{i}+\vb{j})+\brcurs_{n_m}}^{V(T\vb{i}+\vb{j}+\vb{\Delta}_m)+\brcurs_{n_m'}} \vb{A} \cdot d\vb{r}
        \\
        =~~& \int_{T\vb{i}+\vb{j}+V^{-1}\brcurs_{n_m}}
        ^{T\vb{i}+\vb{j}+\vb{\Delta}_m+V^{-1}\brcurs_{n_m'}}
        (A\vb{R})\cdot d\vb{R} 
        + \Lambda(V(T\vb{i}+\vb{j}+\vb{\Delta}_m)+\brcurs_{n_m'})
        - \Lambda(V(T\vb{i}+\vb{j})+\brcurs_{n_m})
        \\
        =~~& \braket{\Delta+R_f-R_i | I+R_i} + \dfrac{1}{2} \braket{\Delta+R_f-R_i | \Delta+R_f-R_i}
        \\&
        -\braket{R_f|I+\Delta} - \dfrac{1}{2} \braket{R_f |R_f}
        +\braket{R_i|I} - \dfrac{1}{2} \braket{R_i|R_i}
        \\
        =~~& \braket{\Delta|I}+\dfrac{1}{2} \braket{\Delta | \Delta}_{sym}
        +\braket{R_f|R_i}_{asym}
        +\braket{\Delta | R_f +R_i}_{asym}
        \\
        = ~~& \vb{i}^T T^T A^T T  \tfloor*{\vb{j}+\vb{\Delta}_m}
        \\+&\dfrac{1}{2} (\tfloor*{\vb{j}+\vb{\Delta}_m})^T T^T A_{sym} T (\tfloor*{\vb{j}+\vb{\Delta}_m})
        \\+& (\tremain*{\vb{j}+\vb{\Delta}_m}+V^{-1}\brcurs_{n_m'})^T A_{asym} (\vb{j}+V^{-1}\brcurs_{n_m})
        \\+& (\tfloor*{\vb{j}+\vb{\Delta}_m})^T T^T A_{asym} (\tremain*{\vb{j}+\vb{\Delta}_m}+V^{-1}\brcurs_{n_m'}+\vb{j}+V^{-1}\brcurs_{n_m})
    \end{split}\end{equation}

    Note that only the first term is dependent on $\vb{i}$. Also note that, when we exchange $(\tremain*{\vb{j}+\vb{\Delta}_m}+V^{-1}\brcurs_{n_m'})~\leftrightarrow~(\vb{j}+V^{-1}\brcurs_{n_m})$ and change the sign of $\tfloor*{\vb{j}+\vb{\Delta}_m}$ (i.e. exchange $\ket{R_f}~\leftrightarrow~\ket{R_i}$ and change the sign of $\ket{\Delta}$), third and fourth term changes its sign, while the second term maintains its sign.

According to the Equation \ref{eq:Fourier transform of operators},
    \begin{equation}\begin{split}\nonumber
        \textrm{H} = & -\sum_{\substack{~\vb{j}\in J,\\ m=1,2,...,N_{hop}}} \iint_{BZ} \iint_{BZ} d^2\vb{k}d^2\vb{k}'
    ~t_m c_{\vb{k},(\tremain*{\vb{j}+\vb{\Delta}_m},~n_m')}^{\dagger} c_{\vb{k}',(\vb{j},n_m)}
    \\ &\times \exp \Bigg( 
    -i \vb{k} \cdot (VT\tfloor*{\vb{j}+\vb{\Delta}_m})
    \\ & \indent \indent
    +\pi (\tfloor*{\vb{j}+\vb{\Delta}_m})^T T^T A_{sym} T (\tfloor*{\vb{j}+\vb{\Delta}_m})
    \\ & \indent \indent
    +2 \pi (\tremain*{\vb{j}+\vb{\Delta}_m}+V^{-1}\brcurs_{n_m'})^T A_{asym} (\vb{j}+V^{-1}\brcurs_{n_m})
    \\ & \indent \indent
    +2 \pi (\tfloor*{\vb{j}+\vb{\Delta}_m})^T T^T A_{asym} (\tremain*{\vb{j}+\vb{\Delta}_m}+V^{-1}\brcurs_{n_m'}+\vb{j}+V^{-1}\brcurs_{n_m})\Bigg)
    \\ &\times
    \dfrac{|V||T|}{4\pi^2} \sum_{\vb{i}\in\mathbb{Z}^2} \exp(i (VT\vb{i}) \cdot (\vb{k}'-\vb{k}+2\pi (V^T)^{-1} A^T T \tfloor*{\vb{j}+\vb{\Delta}_m}))+h.c.
    \\ =& -\sum_{\substack{~\vb{j}\in J,\\ m=1,2,...,N_{hop}}} \iint_{BZ} \iint_{BZ} d^2\vb{k}d^2\vb{k}'
    ~t_m c_{\vb{k},(\tremain*{\vb{j}+\vb{\Delta}_m},~n_m')}^{\dagger} c_{\vb{k}',(\vb{j},n_m)}
    \times\exp(~same~as~above~)
    \\ &\times\delta(\vb{k}'-\vb{k}+2\pi (V^T)^{-1} A^T T \tfloor*{\vb{j}+\vb{\Delta}_m})+h.c.
    \\=& -\iint_{BZ} d^2\vb{k} \sum_{\substack{\vb{j}\in J,\\ m=1,2,...,N_{hop}}} 
    ~t_m c_{\vb{k},(\tremain*{\vb{j}+\vb{\Delta}_m},~n_m')}^{\dagger} c_{\vb{k}-2\pi (V^T)^{-1} A^T T \tfloor*{\vb{j}+\vb{\Delta}_m},(\vb{j},n_m)}
    \times\exp(~same~as~above~)+h.c.
    \end{split}\end{equation}

    If and only if
    \begin{equation}\begin{split}\nonumber
        & ^{\forall} m,~ 2\pi (V^T)^{-1} A^T T \tfloor*{\vb{j}+\vb{\Delta}_m} \in 2\pi (V^T)^{-1} (T^T)^{-1}\times \mathbb{Z}^2
        \\ \Leftrightarrow &~ T^T A T \in \mathbb{Z}_{2\times 2}
    \end{split}\end{equation}
    holds, since there is no $\vb{k}$-mixing term in the Hamiltonian, the Hamiltonian is simplified as $\textrm{H} = \iint_{BZ} d^2\vb{k} H_{\vb{k}}$ where the kernel of the Hamiltonian $\textrm{H}_{\vb{k}}$ is
    \begin{equation}\begin{split}\nonumber
        H_{\vb{k}} = &-\sum_{\substack{\vb{j}\in J,\\ m=1,2,...,N_{hop}}}  t_m
~ c_{\vb{k},(\tremain*{\vb{j}+\vb{\Delta}_m},~n_m')}^{\dagger} c_{\vb{k},(\vb{j},n_m)}
    \\ &\times \exp \Bigg( 
    -i \vb{k} \cdot (VT\tfloor*{\vb{j}+\vb{\Delta}_m})
    \\ & \indent \indent
    +\pi (\tfloor*{\vb{j}+\vb{\Delta}_m})^T T^T A_{sym} T (\tfloor*{\vb{j}+\vb{\Delta}_m})
    \\ & \indent \indent
    +2 \pi (\tremain*{\vb{j}+\vb{\Delta}_m}+V^{-1}\brcurs_{n_m'})^T A_{asym} (\vb{j}+V^{-1}\brcurs_{n_m})
    \\ & \indent \indent
    +2 \pi (\tfloor*{\vb{j}+\vb{\Delta}_m})^T T^T A_{asym} (\tremain*{\vb{j}+\vb{\Delta}_m}+V^{-1}\brcurs_{n_m'}+\vb{j}+V^{-1}\brcurs_{n_m})\Bigg)+h.c.~.
\end{split}\end{equation}

For further analysis, let's define the term inside the $\exp (\cdot)$ as $i\theta_{non}^{\vb{k}}$.

\newpage

(1 and 2)

According to the braket notation defined above, the hopping phase in real-space(${ }^{\vb{r}}$) and reciprocal-space(${ }^{\vb{k}}$) TB Hamiltonian of [(1) linear case](${ }_{lin}$) and [(2) nonlinear case](${ }_{non}$) are
\begin{equation}\begin{split}\nonumber
\theta_{lin}^{\vb{r}}=&\underline{2\pi\braket{\Delta+R_f-R_i|I}}+\uwave{\pi \braket{\Delta+R_f-R_i|\Delta}} + \pi \braket{\Delta|R_f+R_i} + 2\pi \braket{R_f|R_i}_{asym}
        \\&+\pi(\braket{R_f|R_f}_{sym}-\braket{R_i|R_i}_{sym}),
        \\
\theta_{lin}^{\vb{k}}=&-\vb{k}\cdot V\ket{I}+\uwave{\pi \braket{\Delta+R_f-R_i|\Delta}} + \pi \braket{\Delta|R_f+R_i} + 2\pi \braket{R_f|R_i}_{asym}
        \\&+\pi(\braket{R_f|R_f}_{sym}-\braket{R_i|R_i}_{sym}),
        \\
        \theta_{non}^{\vb{r}}
        =~~& \underline{2\pi \braket{\Delta|I}}+\uwave{\pi \braket{\Delta | \Delta}_{sym}}
        +2 \pi \braket{\Delta | R_f +R_i}_{asym}
        +2 \pi \braket{R_f|R_i}_{asym}
        ,
        \\
        \theta_{non}^{\vb{k}}
        =~~& -\vb{k} \cdot V\ket{I}+\uwave{\pi \braket{\Delta | \Delta}_{sym}}
        +2\pi \braket{\Delta | R_f +R_i}_{asym}
        +2\pi \braket{R_f|R_i}_{asym}
        .
\end{split}\end{equation}

The $\ket{I}$-dependent $\underline{underlined}$ terms of real-space phases led to jump in $\vb{k}$ under the Fourier-transform. This led to the requirements, $^{\forall} \vb{R} \in S_{hop},~ T^T A^T \vb{R} \in \mathbb{Z}^2$ and $T^T A T \in \mathbb{Z}_{2\times 2}$, respectively, to make these jumps to happen between equivalent $\vb{k}$ points in the BZ.

Here, we present another meaning of these requirements. Let's reverse the hopping direction by exchanging $\ket{R_f}~\leftrightarrow~\ket{R_i}$ and $\ket{\Delta}~\leftrightarrow~-\ket{\Delta}$. Since this corresponds to the Hermitian-conjugate hopping term, the phase should change its sign, $\theta \rightarrow -\theta$. Unfortunately, there are terms that are even on this "reverse" operation, the $\uwave{curly-underlined}$ terms. However, the requirements,  [$^{\forall} \vb{R} \in S_{hop},~ T^T A^T \vb{R} \in \mathbb{Z}^2$] or [$T^T A T \in \mathbb{Z}_{2\times 2}$], make these $\uwave{curly-underlined}$ terms $0~or~\pi~(mod~2\pi)$, making them effectively odd on the "reverse" operation. Hence, our requirements allow our TB Hamiltonian to be consistent for the reverse hoppings.

Now, let's look at two terms of $\Lambda(\vb{r})$. In fact, the first term in $\Lambda(\vb{r})$, $-\frac{h}{q} ((V^T)^{-1} A T \vb{i}(\vb{r}))\cdot \vtremain*{\vb{r}}
$, is enough to get a vector potential that minimizes the size of the TB Hamiltonian. Then, what is the role of the second term, $-\frac{h}{2q}(\vtremain*{\vb{r}})^T (V^T)^{-1} A V^{-1}\vtremain*{\vb{r}}$? Take a look at $\braket{}_{sym}$ and $\braket{}_{asym}$ terms of the phases shown above. Since $A_{asym}=\frac{\Phi}{2}\big(\begin{smallmatrix} &-1\\1& \end{smallmatrix}\big)$, $\braket{}_{asym}$ terms are not dependent on how we choose a specific $A$, while $\braket{}_{sym}$ terms are dependent on the choice of $A$. The second term in $\Lambda(\vb{r})$, is added to erase the $A$-dependent last term in $\theta_{lin}^{\cdot}$, $\pi(\braket{R_f|R_f}_{sym}-\braket{R_i|R_i}_{sym})$. You can find that this term is vanished in $\theta_{non}^{\cdot}$. Hence, $\uwave{\pi \braket{\Delta | \Delta}_{sym}}$ became the only $A$-dependent term in $\theta_{non}^{\cdot}$.
\end{proof}

\begin{lemma} \label{lemma:MMT}
    For arbitrary $2\times 2$ matrix $M$, $M \mqty(&-1\\1&) M^T = |M| \mqty(&-1\\1&)$.
\end{lemma}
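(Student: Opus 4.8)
The plan is to verify this elementary $2\times2$ identity by a direct entrywise computation, and, in addition, to record the coordinate-free reason behind it, since that is what makes the identity transparent and explains its later use in deducing $\Phi|T|\in\mathbb{Z}$ from $T^T A T\in\mathbb{Z}_{2\times2}$ (apply the lemma with $M=T^T$ to $A-A^T=\Phi\mqty(&-1\\1&)$).

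Writing $M\defeq\mqty(a&b\\c&d)$, so that $|M|=ad-bc$ and $J\defeq\mqty(&-1\\1&)$, I would first compute $MJ=\mqty(b&-a\\d&-c)$ and then multiply on the right by $M^T=\mqty(a&c\\b&d)$, obtaining
\[
MJM^T=\mqty(ab-ab & bc-ad\\ da-cb & cd-cd)=(ad-bc)\,\mqty(&-1\\1&)=|M|\,J .
\]
That is the entire argument.

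For an invariant reading, note that $J$ represents the signed-area bilinear form, $\vb{x}^T J\vb{y}=-\,|\mqty(\vb{x}&\vb{y})|$ for $\vb{x},\vb{y}\in\mathbb{R}^2$. Hence for every $\vb{x},\vb{y}$,
\[
\vb{x}^T (MJM^T)\vb{y}=(M^T\vb{x})^T J (M^T\vb{y})=-\,|\mqty(M^T\vb{x}&M^T\vb{y})|=-\,|M^T|\,|\mqty(\vb{x}&\vb{y})|=|M|\,\vb{x}^T J\vb{y},
\]
using multiplicativity of the determinant together with $|M^T|=|M|$; since a bilinear form is determined by its matrix, $MJM^T=|M|J$. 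This derivation exhibits the lemma as essentially the multiplicativity of $\det$, and it extends verbatim to the symplectic form in any even dimension.

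The hard part will be essentially nonexistent: the only thing to watch is the sign and transpose bookkeeping in the $2\times2$ product (and $|M^T|=|M|$ in the invariant version). In the paper I would present the one-line direct computation as the proof and keep the invariant reading as a parenthetical remark.
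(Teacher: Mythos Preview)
Your proof is correct and essentially identical to the paper's: both perform the direct entrywise computation with $M=\mqty(a&b\\c&d)$, the only difference being the associativity grouping (you compute $(MJ)M^T$, the paper computes $M(JM^T)$). Your additional invariant reading via the signed-area form is a nice remark but goes beyond what the paper records.
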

\begin{proof}
    Let $M=\mqty(a&b\\c&d)$.
    \begin{equation}\nonumber
    \begin{split}
        M \mqty(&-1\\1&) M^T = \mqty(a&b\\c&d) \mqty(&-1\\1&) \mqty(a&c\\b&d)
        = \mqty(a&b\\c&d) \mqty(-b&-d\\a&c)
        =|M| \mqty(&-1\\1&)
    \end{split}
    \end{equation}
\end{proof}

\begin{lemma}\label{lemma:cx+dy}
    When $a,b,c,d \in \mathbb{Z}$ satisfy $(a,b)=(c,d)=1$,
    $(ad-bc, c \inv_b a + d \inv_a b) = 1$.

\end{lemma}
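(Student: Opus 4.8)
The plan is to use the defining Bézout relation of the inverse-symbols, $a\,\inv_b a + b\,\inv_a b = (a,b) = 1$, and rearrange it so that $c$ and $d$ appear as explicit $\mathbb{Z}$-linear combinations of $ad-bc$ and $c\,\inv_b a + d\,\inv_a b$; then any common divisor of the latter two must divide $(c,d)=1$, forcing the gcd to be $1$.

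Concretely, I would write $x \defeq \inv_b a$, $y \defeq \inv_a b$, $\Delta \defeq ad-bc$ and $s \defeq cx+dy$, so that $ax+by=1$ by hypothesis. First, multiply $ax+by=1$ by $d$ and subtract $bs = bcx+bdy$ to get $\Delta x = d - bs$, i.e. $d = \Delta x + bs$. By the same device, multiply $ax+by=1$ by $c$ and subtract it from $as = acx+ady$ to get $\Delta y = as - c$, i.e. $c = as - \Delta y$. (The two relations together are just the adjugate identity $\operatorname{adj}\!\begin{pmatrix} a & b \\ c & d\end{pmatrix}\begin{pmatrix}1\\s\end{pmatrix} = \Delta\begin{pmatrix}x\\y\end{pmatrix}$ applied to $(x,y)^T$, using $M\cdot(x,y)^T=(1,s)^T$.) Each identity exhibits $d$, respectively $c$, as an integer combination of $\Delta$ and $s$, so any integer dividing both $\Delta$ and $s$ divides $c$ and $d$, hence divides $(c,d)=1$; therefore $(\Delta,s)=1$.

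There is essentially no obstacle here; the only points needing care are the signs in the two derived identities and the ambiguity of $\inv_b a,\inv_a b$, which are determined only up to $(x,y)\mapsto(x+tb,\,y-ta)$ for $t\in\mathbb{Z}$. Under this substitution $s\mapsto s-t\Delta$, so $\gcd(\Delta,s)$ is unchanged and the statement is well posed for every admissible choice of Bézout coefficients; one should also remark that $\Delta$ and $s$ are not both zero (else $c=d=0$ by the identities, contradicting $(c,d)=1$), so the gcd is a positive integer dividing $1$. Note finally that $(a,b)=1$ is used only to guarantee $ax+by=1$ (rather than $=(a,b)$), while $(c,d)=1$ enters only in the concluding divisibility step.
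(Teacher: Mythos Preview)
Your proof is correct and takes a genuinely different route from the paper's. The paper introduces \emph{both} pairs of B\'ezout coefficients, forming the matrix product
\[
\begin{pmatrix} a & b \\ c & d \end{pmatrix}
\begin{pmatrix} \inv_b a & \inv_d c \\ \inv_a b & \inv_c d \end{pmatrix}
=
\begin{pmatrix} 1 & a\inv_d c + b\inv_c d \\ c\inv_b a + d\inv_a b & 1 \end{pmatrix},
\]
and then takes determinants to obtain an explicit B\'ezout identity
\[
(ad-bc)(\inv_b a\,\inv_c d - \inv_a b\,\inv_d c) + (a\inv_d c + b\inv_c d)(c\inv_b a + d\inv_a b) = 1,
\]
which immediately gives coprimality. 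You instead use only the B\'ezout pair for $(a,b)$, apply the adjugate relation to $M\binom{x}{y}=\binom{1}{s}$ to write $c$ and $d$ as integer combinations of $\Delta$ and $s$, and conclude via $(c,d)=1$. Your argument is slightly more economical (no need to name $\inv_d c,\inv_c d$) and your remark that $s\mapsto s-t\Delta$ under the ambiguity of $\inv$ is a nice check of well-posedness; the paper's approach has the complementary advantage of producing explicit B\'ezout coefficients for $\Delta$ and $s$.
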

\begin{proof}
    \begin{equation}\nonumber
        \mqty(a&b\\c&d) \mqty(\inv_b a & \inv_d c\\ \inv_a b & \inv_c d) = \mqty(1 & a\inv_d c+b\inv_c d \\ c\inv_b a+d\inv_a b & 1)
    \end{equation}

    Taking the determinant of this equation,
    \begin{equation}\nonumber
    \begin{split}
        (ad-bc)(\inv_b a \inv_c d - \inv_a b \inv_d c)+(a\inv_d c+b\inv_c d)(c\inv_b a+d\inv_a b) = 1.
    \end{split}
    \end{equation}

    Hence, $(ad-bc, c \inv_b a + d \inv_a b) = 1$.
\end{proof}

\begin{lemma} \label{eq:ax+by=cZ}
    When $a,b,c \in \mathbb{Z}$, the matrix $T \in \mathbb{Z}_{2\times2}$ such that
    \begin{equation}\nonumber
        T \times \mathbb{Z}^2 = \{\mqty(x\\y) \in\mathbb{Z}^2 | ax+by \in c \mathbb{Z}\}
    \end{equation}
    can be selected as follows:

    (1) If $a=b=0,~ T=1.$
    
\bigskip
    (2) If $a$ or $b$ is nonzero $~i.e.~ (a,b) \neq 0$,
    \begin{equation}\nonumber
        T = \mqty(\inv_b a & -b \\ \inv_a b & a)
        \mqty( \dfrac{c}{(a,b,c)} & \\ & \dfrac{1}{(a,b)} )
        ~and~ \mqty|T| = \dfrac{c}{(a,b,c)}.
    \end{equation}

\bigskip
    \indent\indent (2-1) If $b \neq 0$, the matrix $T$ can be further simplified as
    \begin{equation}\nonumber
        T = \mqty( & -b \\ 1 & a) \mqty( (a,b) & \\ -\inv_b a & 1)
        \mqty(\dfrac{c}{b(a,b,c)} & \\ & \dfrac{1}{(a,b)}).
    \end{equation}

    Any solution $T$ can be derived by applying a transform $T\rightarrow TM$ to the aforementioned solution for some matrix $M\in\mathbb{Z}_{2\times 2}$ such that $\det M =1$.
\end{lemma}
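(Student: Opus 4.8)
The plan is to realize the target set
\[
L \;\defeq\; \left\{ \mqty(x\\y) \in \mathbb{Z}^2 \;:\; ax+by \in c\mathbb{Z} \right\}
\]
as the kernel of the group homomorphism $\phi\colon \mathbb{Z}^2 \to \mathbb{Z}/c\mathbb{Z}$, $\mqty(x\\y)\mapsto ax+by \bmod c$, and then to pin down $L$ by a determinant/index count rather than by hand-computing a Smith normal form. Part (1) is immediate: if $a=b=0$ then $ax+by=0\in c\mathbb{Z}$ for every $\mqty(x\\y)$, so $L=\mathbb{Z}^2$ and $T=1$ works. For part (2) I assume $c\neq 0$, and WLOG $c>0$ (replacing $c\mathbb{Z}$ by $|c|\mathbb{Z}$ changes neither $L$ nor $(a,b,c)$); the case $c=0$ gives a rank-one $L$, for which the stated---now singular---$T$ still has $L$ as its column span, as one checks directly.

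First I would compute the index $[\mathbb{Z}^2:L]$. The image $\phi(\mathbb{Z}^2)$ is the subgroup of $\mathbb{Z}/c\mathbb{Z}$ generated by $a$ and $b$, hence by $(a,b)$, and therefore has order $c/((a,b),c)=c/(a,b,c)$; by the first isomorphism theorem $[\mathbb{Z}^2:L]=c/(a,b,c)$. Next I would verify that both columns of $T=\mqty(\inv_b a & -b \\ \inv_a b & a)\mqty(\dfrac{c}{(a,b,c)} & \\ & \dfrac{1}{(a,b)})$ lie in $L$: its second column $\mqty(-b/(a,b)\\ a/(a,b))$ gives $a\cdot(-b/(a,b))+b\cdot(a/(a,b))=0\in c\mathbb{Z}$, and its first column $\dfrac{c}{(a,b,c)}\mqty(\inv_b a\\ \inv_a b)$ gives $\dfrac{c}{(a,b,c)}(a\inv_b a+b\inv_a b)=\dfrac{c\,(a,b)}{(a,b,c)}\in c\mathbb{Z}$, using the B\'ezout identity $a\inv_b a+b\inv_a b=(a,b)$ and $(a,b,c)\mid(a,b)$. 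The same identity shows the leading factor has determinant $(a,b)$, so $|T|=(a,b)\cdot\dfrac{c}{(a,b,c)}\cdot\dfrac{1}{(a,b)}=\dfrac{c}{(a,b,c)}$. Since $T\mathbb{Z}^2\subseteq L$ and $[\mathbb{Z}^2:T\mathbb{Z}^2]=|T|=c/(a,b,c)=[\mathbb{Z}^2:L]$, the finite group $L/T\mathbb{Z}^2$ is trivial, i.e.\ $T\mathbb{Z}^2=L$.

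For (2-1) I would simply multiply out
\[
\mqty(&-b\\1&a)\mqty((a,b)&\\-\inv_b a&1)
=\mqty(b\inv_b a & -b \\ (a,b)-a\inv_b a & a)
=\mqty(b\inv_b a & -b \\ b\inv_a b & a),
\]
again using $a\inv_b a+b\inv_a b=(a,b)$; right-multiplying this by $\mqty(\dfrac{c}{b(a,b,c)}&\\&\dfrac{1}{(a,b)})$ cancels the factors $b$ and reproduces exactly the $T$ of part (2), so (2-1) is merely a refactoring in which the leading block $\mqty(&-b\\1&a)$ is visibly integral. For the last sentence, any integer $T'$ with $T'\mathbb{Z}^2=L=T\mathbb{Z}^2$ satisfies $T'=TM$ with $M=T^{-1}T'$, where $M$ and $M^{-1}$ are both integral, being change-of-basis matrices between two bases of the same lattice; hence $M\in\mathrm{GL}_2(\mathbb{Z})$, and since $|T'|=\pm[\mathbb{Z}^2:L]=\pm|T|$, restricting (as the paper does) to transformations with positive determinant forces $|M|=1$.

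I do not expect a real obstacle here: this is a short lattice-index computation. The only points that need care are (i) applying the GCD conventions in the degenerate sub-cases where exactly one of $a,b$ vanishes, where $(a,b)=\max(|a|,|b|)$ and one B\'ezout coefficient is free, and (ii) confirming that every displayed formula is invariant under the inherent ambiguity of $\inv_b a,\inv_a b$: the shift $\inv_b a\mapsto\inv_b a+n\tfrac{b}{(a,b)}$, $\inv_a b\mapsto\inv_a b-n\tfrac{a}{(a,b)}$ changes the first column of $T$ by $-n\tfrac{c}{(a,b,c)}$ times its second column---a determinant-preserving column operation that leaves $T\mathbb{Z}^2$ unchanged---so the positive-determinant solution set is precisely $\{TM:M\in\mathrm{SL}_2(\mathbb{Z})\}$, as the final claim asserts.
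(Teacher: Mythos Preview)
Your proof is correct and takes a genuinely different route from the paper's. The paper establishes $L\subseteq T\mathbb{Z}^2$ constructively: starting from an arbitrary $(x,y)\in L$, it writes $ax+by=ct$, observes $(a,b)\mid ct$ so that $t=\tfrac{(a,b)}{(a,b,c)}k$, and then subtracts the B\'ezout combination $\tfrac{ck}{(a,b,c)}(a\,\inv_b a+b\,\inv_a b)$ to land on the kernel line $a(\cdot)+b(\cdot)=0$, from which an explicit preimage $(k,l)$ under $T$ is read off; the reverse inclusion $T\mathbb{Z}^2\subseteq L$ is checked afterwards via $(a\ b)T$. You instead verify only the easy inclusion $T\mathbb{Z}^2\subseteq L$ and close the argument by an index computation, identifying $[\mathbb{Z}^2:L]$ as $|\phi(\mathbb{Z}^2)|=c/(a,b,c)$ via the first isomorphism theorem and matching it to $|T|$. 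Your approach is shorter and more conceptual, and it has the pleasant side effect of giving the value $|T|=c/(a,b,c)$ an intrinsic meaning as an index before any formula for $T$ is written down; the paper's element-chasing, on the other hand, makes the parametrization $(k,l)\mapsto(x,y)$ completely explicit, which is what the paper actually reuses downstream in the proof of Theorem~\ref{theorem:|T|=q^2/(q,ad-bc)}. Your handling of (2\nobreakdash-1), of the $c=0$ degeneration, and of the $\mathrm{SL}_2(\mathbb{Z})$ ambiguity is fine; the observation that the B\'ezout ambiguity amounts to a column operation on $T$ is a nice touch that the paper leaves implicit.
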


\begin{proof}

(1) If $a=b=0$, $T=1$.

\bigskip
(2) If $a$ or $b$ is nonzero, $(a,b)$ and $(a,b,c)$ are nonzero.

Given $x,y\in \mathbb{Z}$, assume that $^{\exists} t \in \mathbb{Z} ~s.t.~ ax+by=ct$.

Since $(a,b)|ct$, $^{\exists} k \in \mathbb{Z} ~s.t.~ t= \frac{(a,b)}{(a,b,c)} k$.

$ax+by=\dfrac{(a,b)c}{(a,b,c)} k=\dfrac{ck}{(a,b,c)} (a\inv_b a + b\inv_a b)$

$ \Rightarrow a(x-\dfrac{ck}{(a,b,c)} \inv_b a)+b(y-\dfrac{ck}{(a,b,c)} \inv_a b)=0$

$\Rightarrow \mqty(x\\y)=\mqty(\dfrac{c}{(a,b,c)} \inv_b a & -\dfrac{b}{(a,b)}\\ \dfrac{c}{(a,b,c)} \inv_a b & \dfrac{a}{(a,b)}) \times \mqty(k\\l)~for~some~l \in \mathbb{Z}$

When we define $T \defeq \mqty(\inv_b a & -b \\ \inv_a b & a)
        \mqty( \frac{c}{(a,b,c)} & \\ & \frac{1}{(a,b)} )$, 
since $\mqty(a & b) T = \dfrac{(a,b)c}{(a,b,c)}\mqty(1 & 0\\ 0& 0)$,
$^{\forall} \mqty(x\\y)\in T\times \mathbb{Z}^2,~ ax+by \in c \mathbb{Z}$.

$|T| = \mqty|\inv_b a & -b\\ \inv_a b & a|\mqty|\dfrac{c}{(a,b,c)} &  \\  & \dfrac{1}{(a,b)}|=\dfrac{c}{(a,b,c)}$.

\bigskip

\indent\indent (2-1) If $b \neq 0$, since $\inv_a b = \frac{(a,b)-a \inv_b a}{b}$,
\begin{equation}\nonumber
    \begin{split}
        T & = \mqty(\inv_b a & -b \\ -\dfrac{a}{b} \inv_b a + \dfrac{(a,b)}{b} & a) \mqty(\dfrac{c}{(a,b,c)} & \\ & \dfrac{1}{(a,b)})
        \\ & = \mqty(0 & -b \\ 1 & a) \mqty(\dfrac{(a,b)}{b} &\\-\dfrac{\inv_b a}{b} & 1) \mqty(\dfrac{c}{(a,b,c)} & \\ & \dfrac{1}{(a,b)})
        \\ & = \mqty(0 & -b \\ 1 & a) \mqty((a,b) &\\-\inv_b a & 1) \mqty(\dfrac{c}{b(a,b,c)} & \\ & \dfrac{1}{(a,b)}).
    \end{split}
\end{equation}
\end{proof}

\begin{theorem}\label{theorem:|T|=q^2/(q,ad-bc)}
    Given a matrix $Q\in\mathbb{Q}_{2\times 2}$,
    
    The matrix $T \in \mathbb{Z}_{2\times2}$ such that
    \begin{equation}\nonumber
        T \times \mathbb{Z}^2 = \{\mqty(x\\y) \in\mathbb{Z}^2 | Q \mqty(x\\y) \in \mathbb{Z}^2\}
    \end{equation}
    satisfies $|T|=\frac{q_Q^2}{(q_Q,|\Tilde{Q}|)}$ and a solution $T$ can be selected as follows:

    (1) If $Q = 0 ~ i.e. ~ rank(Q) = 0$, $T=1$.

\bigskip
    
    (2) If $Q \neq 0$ and $|Q| = 0 ~ i.e. ~ rank(Q) = 1$,
    \begin{equation}\nonumber
        T = \mqty(\inv_{\Tilde{Q}_{\ast2}} \Tilde{Q}_{\ast1} & -\Tilde{Q}_{\ast2} \\ \inv_{\Tilde{Q}_{\ast1}} \Tilde{Q}_{\ast2} & \Tilde{Q}_{\ast1}) \mqty(q_Q&\\&1).
    \end{equation}

\bigskip

    (3) If $|Q|\neq 0 ~ i.e. ~ rank(Q) = 2$,
    \begin{equation}\nonumber
        T = q_Q |\Tilde{Q}| \Tilde{Q}^{-1}
        \mqty(\dfrac{1}{\Tilde{Q}_{2\ast}(q_Q,\dfrac{|\Tilde{Q}|}{\Tilde{Q}_{2\ast}})} & 
        \dfrac{q_Q\Tilde{Q}_{1\ast}}{|\Tilde{Q}|(q_Q,|\Tilde{Q}|)}\inv_{\frac{|\Tilde{Q}|}{\Tilde{Q}_{1\ast}}} \left(\dfrac{q_Q}{(\Tilde{Q}_{1\ast},q_Q)} (\Tilde{Q}_{21}\inv_{\Tilde{Q}_{12}} \Tilde{Q}_{11} + \Tilde{Q}_{22} \inv_{\Tilde{Q}_{11}} \Tilde{Q}_{12})\right) \\ &
        \dfrac{\Tilde{Q}_{2\ast}(q_Q,\dfrac{|\Tilde{Q}|}{\Tilde{Q}_{2\ast}})}{|\Tilde{Q}|(q_Q,|\Tilde{Q}|)}
        ).
    \end{equation}

    Any solution $T$ can be derived by applying a transform $T\rightarrow TM$ to the aforementioned solution for some matrix $M\in\mathbb{Z}_{2\times 2}$ such that $\det M =1$.
\end{theorem}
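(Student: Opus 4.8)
I would first translate the set in the statement into a sublattice. Writing $Q=\tfrac{1}{q_Q}\Tilde{Q}$ with $\Tilde{Q}\in\mathbb{Z}_{2\times2}$ and $\gcd\Tilde{Q}=1$, the set equals $\Lambda\defeq\{\vb{v}\in\mathbb{Z}^2 : \Tilde{Q}\vb{v}\in q_Q\mathbb{Z}^2\}$, which satisfies $q_Q\mathbb{Z}^2\subseteq\Lambda\subseteq\mathbb{Z}^2$ and is therefore a rank-$2$ sublattice. Any rank-$2$ sublattice is $T\times\mathbb{Z}^2$ for some $T\in\mathbb{Z}_{2\times2}$ (take a $\mathbb{Z}$-basis as columns), two such $T$ differ by a right factor in $GL_2(\mathbb{Z})$, and normalizing $|T|>0$ — which the asserted value $\tfrac{q_Q^2}{(q_Q,|\Tilde{Q}|)}>0$ forces — leaves precisely the freedom $T\mapsto TM$, $\det M=1$. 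So everything reduces to (i) the index $|T|=[\mathbb{Z}^2:\Lambda]$ and (ii) one explicit basis of $\Lambda$ per rank.

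For (i) I would use the Smith normal form $\Tilde{Q}=U\,\mathrm{diag}(1,|\Tilde{Q}|)\,W$ with $U,W\in GL_2(\mathbb{Z})$ (the first elementary divisor is $\gcd\Tilde{Q}=1$, the product of the two is $|\det\Tilde{Q}|=|\Tilde{Q}|$, and for $\mathrm{rank}\,\Tilde{Q}<2$ one reads $|\Tilde{Q}|=0$). Since left-multiplying $\Tilde{Q}$ by a unit does not change $\Lambda$, the condition becomes $\mathrm{diag}(1,|\Tilde{Q}|)\,W\vb{v}\in q_Q\mathbb{Z}^2$, i.e. with $\vb{w}=W\vb{v}$: $q_Q\mid w_1$ and $q_Q\mid|\Tilde{Q}|\,w_2$, so $\vb{w}\in\mathrm{diag}\!\big(q_Q,\tfrac{q_Q}{(q_Q,|\Tilde{Q}|)}\big)\mathbb{Z}^2$ and $T=W^{-1}\mathrm{diag}\!\big(q_Q,\tfrac{q_Q}{(q_Q,|\Tilde{Q}|)}\big)$ works. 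This gives $|T|=\tfrac{q_Q^2}{(q_Q,|\Tilde{Q}|)}$ in all ranks, specializing (with $(q_Q,0)=q_Q$) to $q_Q$ when $\mathrm{rank}=1$ and to $1$ when $\mathrm{rank}=0$ (where $q_Q=1$).

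For (ii): rank $0$ is immediate ($\Lambda=\mathbb{Z}^2$, $T=1$). In rank $1$, $\Tilde{Q}=\vb{p}\,\vb{r}^{T}$ with $\vb{p},\vb{r}$ primitive (primitivity of each follows from $\gcd\Tilde{Q}=1$), so by primitivity of $\vb{p}$ the membership $\Tilde{Q}\vb{v}\in q_Q\mathbb{Z}^2$ collapses to the single congruence $r_1x+r_2y\in q_Q\mathbb{Z}$, and Lemma \ref{eq:ax+by=cZ}(2) applied with $(a,b,c)=(r_1,r_2,q_Q)$ — noting $(a,b)=1$ and $(a,b,c)=1$ — yields the stated $T$ (with $r_j$ playing the role of $\Tilde{Q}_{\ast j}$ up to sign, the sign being absorbable into the right-$SL_2$ ambiguity). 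In rank $2$, I would write $\Lambda=\Lambda_1\cap\Lambda_2$ with $\Lambda_i=\{\vb{v}:\Tilde{Q}_{i1}x+\Tilde{Q}_{i2}y\in q_Q\mathbb{Z}\}$, use Lemma \ref{eq:ax+by=cZ} (its refined (2-1) form) to get a basis $T_1$ of $\Lambda_1$, substitute $\vb{v}=T_1\vb{w}$ so that membership in $\Lambda_2$ becomes one further linear congruence in $\vb{w}$, apply Lemma \ref{eq:ax+by=cZ} again to get $T_2$, and set $T=T_1T_2$.

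The main obstacle is compressing $T_1T_2$ into the closed form displayed in part (3). This requires: Lemma \ref{lemma:cx+dy} to certify that the coefficient gcd of the second congruence after the $T_1$-substitution is the expected quantity (so the two elementary divisors multiply to $\tfrac{q_Q^2}{(q_Q,|\Tilde{Q}|)}$, matching (i)); the identity $q_Q|\Tilde{Q}|\,\Tilde{Q}^{-1}=q_Q\,\mathrm{adj}(\Tilde{Q})\in\mathbb{Z}_{2\times2}$ together with $|\mathrm{adj}\,\Tilde{Q}|=|\Tilde{Q}|$ to re-express $W^{-1}$ through $\Tilde{Q}^{-1}$; and patient tracking of Bézout-coefficient ambiguities and gcd identities such as $(q_Q,|\Tilde{Q}|)=\big(\Tilde{Q}_{2\ast}(q_Q,\tfrac{|\Tilde{Q}|}{\Tilde{Q}_{2\ast}}),\dots\big)$. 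One must also handle the degenerate sub-cases (for instance $\Tilde{Q}_{12}=0$, or a row/column gcd sharing a factor with $q_Q$) by the obvious coordinate swaps, and finish with the determinant check: $q_Q\,\mathrm{adj}(\Tilde{Q})$ has determinant $q_Q^2|\Tilde{Q}|$ and the triangular right factor has determinant $\tfrac{1}{|\Tilde{Q}|(q_Q,|\Tilde{Q}|)}$, giving $\det T=\tfrac{q_Q^2}{(q_Q,|\Tilde{Q}|)}$ as required.
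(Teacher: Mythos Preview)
Your proposal is correct (modulo the harmless slip $Q=\tfrac{1}{q_Q}\tilde Q$: in the paper's conventions $Q=\tfrac{p_Q}{q_Q}\tilde Q$, but the equivalence $Q\vb{v}\in\mathbb{Z}^2\Leftrightarrow\tilde Q\vb{v}\in q_Q\mathbb{Z}^2$ still holds because $(p_Q,q_Q)=1$, exactly as the paper notes at the start of its case~(3)). Your Smith-normal-form argument for the index is a genuinely different and cleaner route than the paper's: the paper never invokes SNF and instead recovers $|T|=\tfrac{q_Q^2}{(q_Q,|\tilde Q|)}$ only at the very end, as the determinant of the explicit $T$ it has laboriously built in case~(3); your two-line SNF computation handles all three ranks uniformly and decouples the index from the explicit basis. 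For the explicit $T$, however, your plan coincides with the paper's: apply Lemma~\ref{eq:ax+by=cZ} to the first-row congruence, substitute the resulting $T_1$, apply the lemma again (in its (2-1) form) to the transformed second-row congruence, and set $T=T_1T_2$. The paper carries out precisely this double application; the simplification you flag as the ``main obstacle'' is done there via Lemma~\ref{lemma:MMT} (the identity $M\left(\begin{smallmatrix}0&-1\\1&0\end{smallmatrix}\right)M^T=|M|\left(\begin{smallmatrix}0&-1\\1&0\end{smallmatrix}\right)$, which is the $2\times2$ adjugate trick you allude to) to collapse a column of the product, and via Lemma~\ref{lemma:cx+dy} to evaluate the intermediate gcd $(a,b)$ --- exactly the ``patient tracking'' you anticipate.
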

\begin{proof}
    (1) If $Q=0$, it's trivial that $T=1$.

Setting $q_Q=1$ from $Q=\dfrac{0}{1}\mqty(0&0\\0&0)$, $|T|=1=\frac{1^2}{(1,0)}$.

\bigskip

    (2) If $Q \neq 0$ and $|Q| = 0$,

    Because $|Q|=0$,
    $Q=\dfrac{p_Q}{q_Q} \mqty(\Tilde{Q}_{1\ast}\\ \Tilde{Q}_{2\ast}) \mqty(\Tilde{Q}_{\ast 1} & \Tilde{Q}_{\ast 2})$.
    
    Since $(\Tilde{Q}_{1\ast},\Tilde{Q}_{2\ast})=(p_Q,q_Q)=1$, $Q \mqty(x\\y) \in \mathbb{Z} \Leftrightarrow \Tilde{Q}_{\ast 1}x+\Tilde{Q}_{\ast 2}y \in q_Q \mathbb{Z}$

    Using case (2) of the Lemma \ref{eq:ax+by=cZ}, because $(\Tilde{Q}_{\ast 1},\Tilde{Q}_{\ast 2})=1$, a solution $T$ is
    \begin{equation}\nonumber
        T=\mqty(\inv_{\Tilde{Q}_{\ast 2}} \Tilde{Q}_{\ast 1} & -\Tilde{Q}_{\ast 2} \\ \inv_{\Tilde{Q}_{\ast 1}} \Tilde{Q}_{\ast 2} & \Tilde{Q}_{\ast 1}) \mqty(q_Q&\\&1).
    \end{equation}

    Also note that $|T|=q_Q=\frac{q_Q^2}{(q_Q,0)}$.

\bigskip

    (3) If $|Q| \neq 0$,
    
    Since $(p_Q,q_Q)=1$, 
    $Q \mqty(x\\y) \in \mathbb{Z}^2
    \Leftrightarrow \Tilde{Q} \mqty(x\\y)\in q_Q \mathbb{Z}^2$.

    Because $\Tilde{Q}_{11}=\Tilde{Q}_{12}=0 \rightarrow |Q|=0$ leads to a contradiction, $\Tilde{Q}_{11}$ or $\Tilde{Q}_{12}$ is nonzero.

    Hence from $\Tilde{Q}_{11}x+\Tilde{Q}_{12}y \in q_Q\mathbb{Z}$, according to the case (2) of Lemma \ref{eq:ax+by=cZ},

    \begin{equation}\nonumber
        \mqty(x\\y)=\mqty(\inv_{\Tilde{Q}_{12}} \Tilde{Q}_{11}&-\Tilde{Q}_{12}\\\inv_{\Tilde{Q}_{11}} \Tilde{Q}_{12}&\Tilde{Q}_{11}) \mqty(\dfrac{q_Q}{(\Tilde{Q}_{1\ast},q_Q)}&\\&\dfrac{1}{\Tilde{Q}_{1\ast}}) \mqty(k\\l)
        ~for~some~k,~l \in \mathbb{Z}
    \end{equation}

    Defining $a,~b\in\mathbb{Z}$ as
    \begin{equation}\nonumber
    a\defeq \dfrac{q_Q}{(\Tilde{Q}_{1\ast},q_Q)} (\Tilde{Q}_{21}\inv_{\Tilde{Q}_{12}} \Tilde{Q}_{11}+\Tilde{Q}_{22}\inv_{\Tilde{Q}_{11}} \Tilde{Q}_{12}),~ b\defeq \dfrac{|\Tilde{Q}|}{\Tilde{Q}_{1\ast}}\neq 0,
    \end{equation}

    \begin{equation}\nonumber
    \begin{split}
        ak+bl& = \mqty(a&b) \mqty(k\\l)
        \\&=\mqty(\Tilde{Q}_{21}&\Tilde{Q}_{22}) \mqty(\inv_{\Tilde{Q}_{12}} \Tilde{Q}_{11}&-\Tilde{Q}_{12}\\\inv_{\Tilde{Q}_{11}} \Tilde{Q}_{12}&\Tilde{Q}_{11}) \mqty(\dfrac{q_Q}{(\Tilde{Q}_{1\ast},q)}&\\&\dfrac{1}{\Tilde{Q}_{1\ast}}) \mqty(k\\l)
        \\&=\mqty(\Tilde{Q}_{21}&\Tilde{Q}_{22}) \mqty(x\\y) = \Tilde{Q}_{21}x+\Tilde{Q}_{22}y \in q_Q \mathbb{Z}
    \end{split}
    \end{equation}

    According to the case (2-1) of Lemma \ref{eq:ax+by=cZ},
    \begin{equation}\nonumber
        \mqty(k\\l) \in \mqty(&-b\\1&a) \mqty((a,b)&\\-\inv_b a&1)
        \mqty(\dfrac{q_Q}{B(a,b,q_Q)}&\\&\dfrac{1}{(a,b)}) \times \mathbb{Z}^2.
    \end{equation}

    So
    \begin{equation}\nonumber
        \mqty(x\\y) \in \mqty(\inv_{\Tilde{Q}_{12}} \Tilde{Q}_{11}&-\Tilde{Q}_{12}\\\inv_{\Tilde{Q}_{11}} \Tilde{Q}_{12}&\Tilde{Q}_{11}) \mqty(\dfrac{q_Q}{(\Tilde{Q}_{1\ast},q)}&\\&\dfrac{1}{\Tilde{Q}_{1\ast}}) \mqty(&-b\\1&a) \mqty((a,b)&\\-\inv_b a&1)
        \mqty(\dfrac{q_Q}{B(a,b,q_Q)}&\\&\dfrac{1}{(a,b)}) \times \mathbb{Z}^2
    \end{equation}

    According to the Lemma \ref{lemma:MMT},
    \begin{equation}\nonumber
    \begin{split}
        &\mqty(\inv_{\Tilde{Q}_{12}} \Tilde{Q}_{11}&-\Tilde{Q}_{12}\\\inv_{\Tilde{Q}_{11}} \Tilde{Q}_{12}&\Tilde{Q}_{11}) \mqty(\dfrac{q_Q}{(\Tilde{Q}_{1\ast},q)}&\\&\dfrac{1}{\Tilde{Q}_{1\ast}}) \mqty(-b\\a) \\=& 
        \mqty(\inv_{\Tilde{Q}_{12}} \Tilde{Q}_{11}&-\Tilde{Q}_{12}\\\inv_{\Tilde{Q}_{11}} \Tilde{Q}_{12}&\Tilde{Q}_{11}) \mqty(\dfrac{q_Q}{(\Tilde{Q}_{1\ast},q)}&\\&\dfrac{1}{\Tilde{Q}_{1\ast}}) \mqty(&-1\\1&) \mqty(\dfrac{q_Q}{(\Tilde{Q}_{1\ast},q)}&\\&\dfrac{1}{\Tilde{Q}_{1\ast}}) \mqty(\inv_{\Tilde{Q}_{12}} \Tilde{Q}_{11}&\inv_{\Tilde{Q}_{11}} \Tilde{Q}_{12}\\-\Tilde{Q}_{12}&\Tilde{Q}_{11}) \mqty(\Tilde{Q}_{21}\\\Tilde{Q}_{22})
        \\=&\mqty|\inv_{\Tilde{Q}_{12}} \Tilde{Q}_{11}&-\Tilde{Q}_{12}\\\inv_{\Tilde{Q}_{11}} \Tilde{Q}_{12}&\Tilde{Q}_{11}| \mqty|\dfrac{q_Q}{(\Tilde{Q}_{1\ast},q)}&\\&\dfrac{1}{\Tilde{Q}_{1\ast}}| \mqty(&-1\\1&) \mqty(\Tilde{Q}_{21}\\\Tilde{Q}_{22})
        \\=&\dfrac{q_Q}{(\Tilde{Q}_{1\ast},q_Q)} \mqty(-\Tilde{Q}_{22}\\\Tilde{Q}_{21}).
    \end{split}\end{equation}

    So,
    \begin{equation}\nonumber
    \begin{split}
        \mqty(\inv_{\Tilde{Q}_{12}} \Tilde{Q}_{11}&-\Tilde{Q}_{12}\\\inv_{\Tilde{Q}_{11}} \Tilde{Q}_{12}&\Tilde{Q}_{11}) \mqty(\dfrac{q_Q}{(\Tilde{Q}_{1\ast},q)}&\\&\dfrac{1}{\Tilde{Q}_{1\ast}}) \mqty(&-b\\1&a) = 
        &\mqty(-\dfrac{\Tilde{Q}_{12}}{\Tilde{Q}_{1\ast}} & -\dfrac{q_Q \Tilde{Q}_{22}}{(\Tilde{Q}_{1\ast},q_Q)} \\ \dfrac{\Tilde{Q}_{11}}{\Tilde{Q}_{1\ast}} & \dfrac{q_Q \Tilde{Q}_{21}}{(\Tilde{Q}_{1\ast},q_Q)})
        \\&=\mqty(-\Tilde{Q}_{12}&-\Tilde{Q}_{22}\\\Tilde{Q}_{11}&\Tilde{Q}_{21}) \mqty(\dfrac{1}{\Tilde{Q}_{1\ast}}&\\&\dfrac{q_Q}{(\Tilde{Q}_{1\ast},q_Q)})
    \end{split}        
    \end{equation}

    Then,
    \begin{equation}\begin{split}\label{eq:before}
        \mqty(x\\y) \in &\mqty(-\Tilde{Q}_{12}&-\Tilde{Q}_{22}\\\Tilde{Q}_{11}&\Tilde{Q}_{21}) \mqty(\dfrac{1}{\Tilde{Q}_{1\ast}}&\\&\dfrac{q_Q}{(\Tilde{Q}_{1\ast},q_Q)}) \mqty((a,b)&\\-\inv_b a&1)
        \mqty(\dfrac{q_Q}{b(a,b,q_Q)}&\\&\dfrac{1}{(a,b)}) \times \mathbb{Z}^2
        \\=&\mqty(-\Tilde{Q}_{12}&-\Tilde{Q}_{22}\\\Tilde{Q}_{11}&\Tilde{Q}_{21}) \mqty(\dfrac{q_Q(a,b)}{b\Tilde{Q}_{1\ast}(a,b,q_Q)} & \\
        -\dfrac{{q_Q}^2 \inv_b a}{b(\Tilde{Q}_{1\ast},q_Q)(a,b,q_Q)} & \dfrac{q_Q}{(\Tilde{Q}_{1\ast},q_Q)(a,b)}) \times \mathbb{Z}^2
        \\=&\mqty(\Tilde{Q}_{22}&-\Tilde{Q}_{12}\\-\Tilde{Q}_{21}&\Tilde{Q}_{11}) \mqty(\dfrac{{q_Q}^2 \inv_b a}{b(\Tilde{Q}_{1\ast},q_Q)(a,b,q_Q)} & -\dfrac{q_Q}{(\Tilde{Q}_{1\ast},q_Q)(a,b)} \\ \dfrac{q_Q(a,b)}{b\Tilde{Q}_{1\ast}(a,b,q_Q)}&) \times \mathbb{Z}^2
        \\=&|\Tilde{Q}| \Tilde{Q}^{-1} \mqty(\dfrac{q_Q}{(\Tilde{Q}_{1\ast},q_Q)(a,b)} & \dfrac{{q_Q}^2 \inv_b a}{b(\Tilde{Q}_{1\ast},q_Q)(a,b,q_Q)} \\ &\dfrac{q_Q(a,b)}{b\Tilde{Q}_{1\ast}(a,b,q_Q)}) \times \mathbb{Z}^2
    \end{split}\end{equation}

    \begin{equation}\begin{split}\nonumber
        (a,b)=&\Tilde{Q}_{2\ast} (\dfrac{q_Q}{(\Tilde{Q}_{1\ast},q_Q)}\dfrac{\Tilde{Q}_{21}\inv_{\Tilde{Q}_{12}} \Tilde{Q}_{11}+\Tilde{Q}_{22}\inv_{\Tilde{Q}_{11}} \Tilde{Q}_{12}}{\Tilde{Q}_{2\ast}},~\dfrac{|\Tilde{Q}|}{\Tilde{Q}_{1\ast}\Tilde{Q}_{2\ast}})
        \\=&\Tilde{Q}_{2\ast}(\dfrac{q_Q}{(\Tilde{Q}_{1\ast},q_Q)},~\dfrac{|\Tilde{Q}|}{\Tilde{Q}_{1\ast}\Tilde{Q}_{2\ast}}) (\because Lemma~\ref{lemma:cx+dy})
        \\=&\dfrac{\Tilde{Q}_{2\ast}}{(\Tilde{Q}_{1\ast},q_Q)} (q_Q,~\dfrac{|\Tilde{Q}|}{\Tilde{Q}_{1\ast}\Tilde{Q}_{2\ast}}(\Tilde{Q}_{1\ast},q_Q))
        =\dfrac{\Tilde{Q}_{2\ast}}{(\Tilde{Q}_{1\ast},q_Q)} (q_Q,~\dfrac{|\Tilde{Q}|}{\Tilde{Q}_{1\ast}\Tilde{Q}_{2\ast}}\Tilde{Q}_{1\ast},~\dfrac{|\Tilde{Q}|}{\Tilde{Q}_{1\ast}\Tilde{Q}_{2\ast}}q_Q)
        \\=&\dfrac{\Tilde{Q}_{2\ast}}{(\Tilde{Q}_{1\ast},q_Q)} (q_Q,~\dfrac{|\Tilde{Q}|}{\Tilde{Q}_{2\ast}})
    \end{split}\end{equation}

    \begin{equation}\begin{split}\nonumber
        (a,b,q_Q)=&(\dfrac{\Tilde{Q}_{2\ast}}{(\Tilde{Q}_{1\ast},q_Q)} (q_Q,~\dfrac{|\Tilde{Q}|}{\Tilde{Q}_{2\ast}}),~q_Q)
        =\dfrac{1}{(\Tilde{Q}_{1\ast},q_Q)}(q_Q\Tilde{Q}_{2\ast},~|\Tilde{Q}|,~q_Q(\Tilde{Q}_{1\ast},q_Q))
        \\=&\dfrac{1}{(\Tilde{Q}_{1\ast},q_Q)}(q_Q(\Tilde{Q}_{1\ast},\Tilde{Q}_{2\ast},q_Q),~|\Tilde{Q}|)
        =\dfrac{(q_Q,|\Tilde{Q}|)}{(\Tilde{Q}_{1\ast},q_Q)}~(\because (\Tilde{Q}_{1\ast},\Tilde{Q}_{2\ast})=1)
    \end{split}\end{equation}

    Substituting $a,~b,~(a,~b),~(a,~b,~q_Q)$ into the Equation \ref{eq:before},
    $\mqty(x\\y)\in T \times \mathbb{Z}^2$ where
    \begin{equation}\nonumber
        T = q_Q |\Tilde{Q}| \Tilde{Q}^{-1}
        \mqty(\dfrac{1}{\Tilde{Q}_{2\ast}(q_Q,\dfrac{|\Tilde{Q}|}{\Tilde{Q}_{2\ast}})} & 
        \dfrac{q_Q\Tilde{Q}_{1\ast}}{|\Tilde{Q}|(q_Q,|\Tilde{Q}|)}\inv_{\frac{|\Tilde{Q}|}{\Tilde{Q}_{1\ast}}} \left(\dfrac{q_Q}{(\Tilde{Q}_{1\ast},q_Q)} (\Tilde{Q}_{21}\inv_{\Tilde{Q}_{12}} \Tilde{Q}_{11} + \Tilde{Q}_{22} \inv_{\Tilde{Q}_{11}} \Tilde{Q}_{12})\right) \\ &
        \dfrac{\Tilde{Q}_{2\ast}(q_Q,\dfrac{|\Tilde{Q}|}{\Tilde{Q}_{2\ast}})}{|\Tilde{Q}|(q_Q,|\Tilde{Q}|)}
        )
    \end{equation}
    The determinant of the matrix $T$ is calculated as
    \begin{equation}\nonumber
        |T|=\dfrac{q_Q^2}{(q_Q,|\Tilde{Q}|)}
    \end{equation}.
\end{proof}

\begin{lemma}\label{lemma:(a,cx-b)}
    Given $a,b,c \in \mathbb{Z}$,
    
    (1) When (a,c)=1,
    \begin{equation}\nonumber
        \{-b,~c-b,~2c-b,~...,~c(a-1)-b\} \equiv \{0,~1,~2,...,~a-1\}~(mod~ a).
    \end{equation}

\bigskip

    (2) When $(b,c)=1$, the solution $x\in\mathbb{Z}$ such that $a$ and $cx-b$ are coprime (i.e. $(a,cx-b)=1$) is $1+b\inv_{\erase_c a} c$.
    
    Ambiguity of the definition of $\inv$ function explains all solutions.
\end{lemma}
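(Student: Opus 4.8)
The plan is to dispatch the two parts separately by elementary number theory, assuming throughout part (2) that $a\neq 0$ so that $\erase_c a$ is a well-defined positive integer. For part (1), the guiding fact is that $(a,c)=1$ makes multiplication by $c$ a bijection on $\mathbb{Z}/a\mathbb{Z}$. Concretely, I would show that the $a$ integers $ck-b$ with $k=0,1,\dots,a-1$ are pairwise incongruent modulo $a$: from $ck_1-b\equiv ck_2-b\pmod a$ one gets $a\mid c(k_1-k_2)$, and cancelling the unit $c$ forces $k_1\equiv k_2\pmod a$, hence $k_1=k_2$ in the stated range. So $\{-b,\,c-b,\,\dots,\,c(a-1)-b\}$ consists of $a$ pairwise distinct residues among the $a$ classes modulo $a$, hence is a complete residue system and equals $\{0,1,\dots,a-1\}\pmod a$.

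For part (2), I would write $a'\defeq\erase_c a$ and split $a=a'g$, where $a'$ is the product of the prime powers in the factorization of $a$ whose prime does not divide $c$ (so $(a',c)=1$) and $g$ collects the remaining prime powers (so every prime factor of $g$ divides $c$). Since $(a',c)=1$, the B\'ezout coefficient $\inv_{a'}c$ is defined and $c\,\inv_{a'}c\equiv1\pmod{a'}$. Taking $x=1+b\,\inv_{a'}c$ gives $cx-b=c+b\,(c\,\inv_{a'}c-1)$, and $a'\mid(c\,\inv_{a'}c-1)$ forces $cx-b\equiv c\pmod{a'}$, so $(a',\,cx-b)=(a',c)=1$. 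Separately, any prime $p\mid g$ divides $c$, hence $cx-b\equiv-b\pmod p$, and $(b,c)=1$ together with $p\mid c$ forces $p\nmid b$, so $p\nmid cx-b$; thus $(g,\,cx-b)=1$. A common prime divisor of $a=a'g$ and $cx-b$ would divide $a'$ or $g$, and both are excluded, so $(a,\,cx-b)=1$, the asserted formula.

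For the final sentence, I would note that the argument just given uses the value of $\inv_{a'}c$ only through the congruence $c\,\inv_{a'}c\equiv1\pmod{a'}$, which is preserved when $\inv_{a'}c$ is replaced by $\inv_{a'}c+n a'$ for $n\in\mathbb{Z}$ (the ambiguity step being $a'/(c,a')=a'$); hence every $x$ in the family $1+b\,(\inv_{a'}c+n a')$ is a valid solution, which is the intended reading of ``ambiguity of $\inv$ explains all solutions.'' The step I expect to need the most care is exactly this last matching: one must check that the condition $(a,cx-b)=1$ depends on $x$ only through its residues modulo the primes dividing $a'$ (the primes dividing $g$ impose no constraint, by the displayed computation) and then confirm that sweeping the $\inv$-ambiguity through the substitution $x=1+by$ realizes exactly the intended family of solutions; the individual prime-by-prime coprimality checks are routine.
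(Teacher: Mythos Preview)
Your proof is correct and follows essentially the same route as the paper: part (1) is identical (pairwise incongruence of $ck-b$ via cancellation of the unit $c$), and in part (2) both arguments reduce to checking $(\erase_c a,\,cx-b)=1$ and then use the B\'ezout identity $c\,\inv_{\erase_c a}c\equiv 1\pmod{\erase_c a}$ to see $cx-b\equiv c\pmod{\erase_c a}$. The only cosmetic difference is in the reduction step: where you factor $a=a'g$ and argue prime-by-prime on $g$, the paper observes in one line that $(c,cx-b)=(c,b)=1$, which immediately forces any common divisor of $a$ and $cx-b$ to be coprime to $c$ and hence to divide $\erase_c a$.
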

\begin{proof}
    (1) For two integers $k,l$ such that $0\leq k,l < a$,
    \begin{equation}\nonumber
        ck-b \equiv cl-b ~(mod~a) \Rightarrow c(k-l) \equiv 0 ~(mod~a) \Rightarrow k-l \equiv 0 ~(mod~a) \Rightarrow k=l.
    \end{equation}
    As a result, the remainders of $a$ integers $-b,~c-b,~2c-b,~...,~c(a-1)-b$ with respect to the modulus $a$ are all different, hence
    \begin{equation}\nonumber
        \{-b,~c-b,~2c-b,~...,~c(a-1)-b\} \equiv \{0,~1,~2,...,~a-1\}~(mod~ a).
    \end{equation}

\bigskip

    (2) Since $(c,cx-b)=(c,-b)=1$, $(a,cx-b)=1\Leftrightarrow (\erase_c a, cx-b)=1$.
    
    Since $(\erase_c a,c)=1$, the proof of (1) shows that the solution exists and the difference between any two solutions is a multiple of $\erase_c a$. Hence, it is enough to show that $1+b\inv_{\erase_c a} c$ is a solution.

Because

    \begin{equation}\nonumber
    \erase_c a \inv_c (\erase_c a) + c \inv_{\erase_c a} c =1
    ~\Rightarrow~
    \erase_c a~|~1-c \inv_{\erase_c a} c
    \end{equation}

and

    \begin{equation}\nonumber
    (1-c \inv_{\erase_c a} c,~c(1+b\inv_{\erase_c a} c)-b)=(1-c \inv_{\erase_c a} c,c)=(1,c)=1,
    \end{equation}

    $\erase_c a$ and $c(1+b\inv_{\erase_c a} c)-b$ are coprime.
    
\end{proof}

\newpage

\begin{lemma}\label{lemma:(ax,by)=(ax,y)(x,by)/(x,y)}
    For $a,b,x,y\in\mathbb{Z}$ satisfying $(a,b)=1$,
    \begin{equation}\nonumber
        (ax,by)=\dfrac{(ax,y)(x,by)}{(x,y)}.
    \end{equation}
\end{lemma}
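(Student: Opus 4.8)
The plan is to prove the identity one prime at a time, by comparing $p$-adic valuations. Recall that a non-negative integer is determined by the family $(v_p(n))_p$ of its prime exponents, and that for integers $m,n$ one has $v_p\bigl((m,n)\bigr)=\min(v_p(m),v_p(n))$, where with the convention $v_p(0)=+\infty$ this also covers the cases in which one argument vanishes. So, after disposing of the degenerate case $x=y=0$ (where the claimed formula reads $0=0/0$ and is vacuous — I would simply assume $(x,y)\neq 0$), it suffices to show, for every prime $p$,
\[
 v_p\bigl((ax,by)\bigr)\;=\;v_p\bigl((ax,y)\bigr)+v_p\bigl((x,by)\bigr)-v_p\bigl((x,y)\bigr).
\]
Establishing this for all $p$ simultaneously also shows that the expression $(ax,y)(x,by)/(x,y)$ is genuinely an integer, so nothing else needs to be checked.

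Next I would introduce the abbreviations $\alpha=v_p(a)$, $\beta=v_p(b)$, $\xi=v_p(x)$, $\eta=v_p(y)$ (non-negative, with $\xi$ or $\eta$ possibly $+\infty$ but not both), and rewrite the displayed equation, using $v_p\bigl((m,n)\bigr)=\min(v_p m,v_p n)$ and $v_p(mn)=v_p m+v_p n$, as the purely combinatorial statement
\[
 \min(\alpha+\xi,\ \beta+\eta)\;=\;\min(\alpha+\xi,\ \eta)+\min(\xi,\ \beta+\eta)-\min(\xi,\ \eta).
\]
The hypothesis $(a,b)=1$ enters precisely here: it forces $\min(\alpha,\beta)=0$, i.e.\ $\alpha=0$ or $\beta=0$.

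Finally I would split into these two cases. If $\alpha=0$, the first and third terms on the right both equal $\min(\xi,\eta)$ and cancel, leaving $\min(\xi,\beta+\eta)$, which equals the left-hand side since $\alpha+\xi=\xi$. If instead $\beta=0$, the second and third terms both equal $\min(\xi,\eta)$ and cancel, leaving $\min(\alpha+\xi,\eta)$, again equal to the left-hand side. This settles the valuation identity at $p$, and hence the Lemma. I do not anticipate a real obstacle: the only points needing a little care are the bookkeeping around the $\gcd$ conventions for a zero argument (handled by allowing $+\infty$ valuations and excluding $x=y=0$) and stating the single combinatorial identity correctly before the two-case split trivializes it.
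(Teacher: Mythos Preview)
Your proof is correct but takes a genuinely different route from the paper's. The paper argues by direct gcd manipulation: it first factors out $(x,y)$ to write $(ax,by)=(x,y)\bigl(a\frac{x}{(x,y)},\,b\frac{y}{(x,y)}\bigr)$, then uses the multiplicativity $(au,bv)=(a,v)(u,b)$ valid when $(a,b)=(u,v)=1$ (applied with $u=\frac{x}{(x,y)}$, $v=\frac{y}{(x,y)}$), and finally re-absorbs the factor $(x,y)$ via $(a(x,y),y)=(ax,ay,y)=(ax,y)$ and the symmetric identity for $b$. Your approach instead localizes at each prime and reduces everything to the one-line $\min$ identity, which the hypothesis $\min(\alpha,\beta)=0$ trivializes. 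Your method is more mechanical and makes the integrality of $(ax,y)(x,by)/(x,y)$ transparent prime by prime; the paper's argument stays entirely within elementary gcd arithmetic and avoids invoking unique factorization, at the cost of needing the auxiliary multiplicativity step. Both implicitly require $(x,y)\neq 0$, which you flag explicitly and the paper leaves tacit.
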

\begin{proof}
    \begin{equation}\begin{split}\nonumber
        (ax,by)= & (x,y)(a\dfrac{x}{(x,y)},b\dfrac{y}{(x,y)})
        \\=& (x,y)(a,\dfrac{y}{(x,y)})(\dfrac{x}{(x,y)},b)~(\because
        (a,b)=(\dfrac{x}{(x,y)},\dfrac{y}{(x,y)})=1)
        \\=& (x,y)\dfrac{(a(x,y),y)}{(x,y)}\dfrac{(x,b(x,y))}{(x,y)}
        = \dfrac{(ax,ay,y)(x,bx,by)}{(x,y)}
        = \dfrac{(ax,y)(x,by)}{(x,y)}.
    \end{split}\end{equation}
\end{proof}

\begin{lemma}\label{lemma:(a,b,c)=(x,y)=1=>(ax+by,cy)|ac and max is ac}
    For $a,b,c\in\mathbb{Z}$ satisfying $(a,b,c)=1$,

    (1)
    $
        ^{\forall} x,~y \in \mathbb{Z}~ s.t.~ (x,y)=1,
        ~(ax+by,cy)|ac
    $

\bigskip

    (2) Solution $(x,y)\in\mathbb{Z}^2$ of 
    $\begin{cases} (x,y)=1 \\ (ax+by,cy)=ac \end{cases}$ is 
    
    \indent\indent $(c+\frac{bck}{(b,c)}\inv_{\erase_{\left(\frac{c}{(b,c)}\right)} ak} \left(\frac{c}{(b,c)}\right)-bk,~ak)$
    
    \indent\indent where $k$ can be any integer satisfying $(k,b,c)=1$.

    \indent\indent$k$ and the ambiguity of the definition of $\inv$ function explains all solutions.
\end{lemma}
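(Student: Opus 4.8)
The plan for part~(1) is a short gcd manipulation. Set $d\defeq(ax+by,cy)$. The identities $c(ax+by)-b(cy)=acx$ and $a(cy)=acy$ give $d\mid acx$ and $d\mid acy$, hence $d\mid(acx,acy)=ac\,(x,y)=ac$, the last step using the hypothesis $(x,y)=1$. That is all of~(1).

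For part~(2) I would first convert the equality of gcds into divisibility conditions. By part~(1) the value $(ax+by,cy)$ always divides $ac$, so, taking $a,c>0$ (which is harmless), the equation $(ax+by,cy)=ac$ holds iff $ac$ is a common divisor of $ax+by$ and $cy$, i.e.\ iff $a\mid y$ and, writing $y=ak$, $c\mid x+bk$. Reinstating $(x,y)=1$ the system becomes: $y=ak$, $\ c\mid x+bk$, $\ (x,ak)=1$. Next I would extract the common factor $g\defeq(b,c)$. From $(a,b,c)=1$ we get $(a,g)=1$; and any prime dividing $(b,c,k)$ would, via $c\mid x+bk$, divide $x$ and also divide $ak=y$, contradicting $(x,y)=1$, so $(b,c,k)=1$, hence $(g,k)=1$, and together with $(g,a)=1$ this gives $(g,ak)=1$. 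Writing $b=gb'$, $c=gc'$ with $(b',c')=1$, the congruence $c\mid x+bk$ forces $g\mid x$; putting $x=g\xi$ it becomes $c'\mid\xi+b'k$, i.e.\ $\xi=c'n-b'k$ for some $n\in\mathbb Z$, and, using $(g,ak)=1$, the condition $(x,ak)=1$ reduces to $(c'n-b'k,\;ak)=1$.

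The crux is then a single application of Lemma~\ref{lemma:(a,cx-b)}(2) to the pair $(ak,\;c'n-b'k)$: the coprimality coming from the hypotheses on $k$ and from $(a,b,c)=1$ is precisely what that lemma requires, and it pins down the admissible $n$ as $n=1+b'k\,\inv_{\erase_{c'}(ak)}(c')$, with the usual $\inv$-ambiguity describing the rest. Substituting back, $x=g\xi=cn-bk=c+\tfrac{bck}{(b,c)}\,\inv_{\erase_{c'}(ak)}(c')-bk$ and $y=ak$, which is the stated formula once one writes $c'=c/(b,c)$. For the ``all solutions'' clause, the reduction above shows conversely that any solution forces $a\mid y$ and the asserted condition on $k\defeq y/a$, and that for each fixed $k$ the $\inv$-ambiguity ranges over exactly the admissible $x$. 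I expect the main obstacle to be this last step: lining up the several coprimality conditions so that Lemma~\ref{lemma:(a,cx-b)}(2) applies verbatim in the present notation, and checking that its ``$\inv$-ambiguity captures all solutions'' statement transports cleanly through the affine substitution $x=cn-bk$ (and back through the factor $g$).
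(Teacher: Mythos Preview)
Your proof of part~(1) is correct and in fact substantially cleaner than the paper's. The paper introduces a six-parameter factorization $t_1,\dots,t_6$ of $a,b,y$ and invokes Lemma~\ref{lemma:(ax,by)=(ax,y)(x,by)/(x,y)} to arrive at the exact formula $(ax+by,cy)=(a,y)\,(xt_4+t_2t_3t_6,\,c)$ before concluding divisibility; your two-line argument via $d\mid c(ax+by)-b(cy)=acx$ and $d\mid acy$ bypasses all of that, and as a bonus does not even use the hypothesis $(a,b,c)=1$. The paper's finer formula is not reused in part~(2), so nothing is lost.

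For part~(2) your route coincides with the paper's: both start from a solution, write $y=ak$ and $x=cl-bk$ (your $l$ is $n$ after clearing the factor $g=(b,c)$), derive $(k,b,c)=1$, and then invoke Lemma~\ref{lemma:(a,cx-b)}(2) to pin down $l$. Your instinct about the ``main obstacle'' is exactly right, and one point deserves tightening: Lemma~\ref{lemma:(a,cx-b)}(2) requires, in your notation, $(b'k,\,c')=1$, and this does \emph{not} follow from $(k,b,c)=1$ and $(a,b,c)=1$ alone (e.g.\ $a=1$, $b=2$, $c=6$, $k=3$). It does follow once you are already holding a solution, since $(c'n-b'k,\,ak)=1$ forces any common prime of $b'k$ and $c'$ to divide $k$ and hence $ak$. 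The paper is equally terse here (it writes ``because $(\frac{cl-bk}{(b,c)},ak)=1$'' and then applies the lemma), so this is not a divergence from the paper but a place where both arguments are only establishing the ``every solution has this form'' direction, which is what the statement is read as claiming.
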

\begin{proof}
(1) Given $x,y$ such that $(x,y)=1$,
    
    define integers $t_1 \equiv (\frac{a}{(a,b)},y),~
    t_2 \equiv \frac{(a,y)}{t_1},~
    t_3 \equiv \frac{y}{(a,y)},~
    t_4 \equiv \frac{\frac{a}{(a,b)}}{t_1},~
    t_5 \equiv (a,b),~
    t_6 \equiv \frac{b}{(a,b)}.$

    Then, $a=t_1 t_4 t_5,~ b=t_5 t_6,~ y=t_1 t_2 t_3$ and $(t_2 t_3, t_4)=(t_1 t_4, t_6)=(x,t_1 t_2 t_3)=(c,t_5)=1.$

    Also, $t_2 | t_4 t_5$ and $(\frac{t_4 t_5}{t_2},t_3)=(\frac{a}{(a,y)},\frac{y}{(a,y)})=1 \Rightarrow (t_4 t_5,t_2 t_3)=t_2.$
    
    Then,
    \begin{equation}\begin{split}\nonumber
        (ax+by,cy)=&(xt_1 t_4 t_5 + t_1 t_2 t_3 t_5 t_6, c t_1 t_2 t_3)
        = t_1 (t_5(xt_4 + t_2 t_3t_6),c t_2 t_3)
        \\=& t_1 \dfrac{(t_5(xt_4 + t_2 t_3t_6),t_2 t_3)(xt_4 + t_2 t_3t_6,c t_2 t_3)}{(xt_4 + t_2 t_3 t_6,t_2 t_3)}~(\because Lemma~ \ref{lemma:(ax,by)=(ax,y)(x,by)/(x,y)},~(c,t_5)=1)
        \\= & t_1 \dfrac{(xt_4 t_5,t_2 t_3)(xt_4 + t_2 t_3t_6,c t_2 t_3)}{(x t_4, t_2 t_3)}
        \\=& t_1 (t_4 t_5,t_2 t_3)(xt_4 + t_2 t_3t_6,c t_2 t_3)
        (\because (x,t_2 t_3)=(t_4,t_2 t_3)=1)
        \\=&t_1 t_2 (xt_4 + t_2 t_3t_6,c t_2 t_3)
        \\=& t_1 t_2 (xt_4 + t_2 t_3 t_6, c)~(\because(xt_4 + t_2 t_3t_6,t_2 t_3)=(xt_4,t_2 t_3)=1)
        \\=& (a,y) (xt_4 + t_2 t_3 t_6, c)
        \\ \Rightarrow  (ax+by,cy) | & ac.
    \end{split}\end{equation}

\bigskip

(2) If $(x,y)\in\mathbb{Z}^2$ satisfies $(ax+by,cy)=ac$, $x=cl-bk$ and $y=ak$ for some $k,l\in\mathbb{Z}$.

Since $1=(x,y)=(cl-bk,ak)$ should also be satisfied, $(k,b,c)=1$.

Then, because $(\frac{cl-bk}{(b,c)},ak)=1$, according the Lemma \ref{lemma:(a,cx-b)} (2),
$l=1+\dfrac{bk}{(b,c)}\inv_{\erase_{\left(\frac{c}{(b,c)}\right)} ak} \dfrac{c}{(b,c)}$ is a solution.
\end{proof}

\begin{theorem}\label{theorem:main theorem for linear vector potential}
Given $D=\gcd D \mqty(\Tilde{D}_{11} & \\ \Tilde{D}_{21} & \Tilde{D}_{22})\in \mathbb{Q}_{2\times 2}$ and $\Phi \in \mathbb{Q}$,

let's think about solution $A\in\mathbb{Q}_{2\times 2},~T\in\mathbb{Z}_{2\times 2}$ satisfying
$\begin{cases}
A_{21}-A_{12}=\Phi \\ 
|T|>0 \\
D^T A T \in \mathbb{Z}_{2\times 2}
\end{cases}$.

(1) $q_{\Phi|\Tilde{D}| D }~|~|T|$

\bigskip

(2) Define an integer
$\delta \defeq \Tilde{D}_{22}(1+\dfrac{\Tilde{D}_{21}}{(\Tilde{D}_{21},\Tilde{D}_{22})} \inv_{\left(\erase_{\frac{\Tilde{D}_{22}}{(\Tilde{D}_{21},\Tilde{D}_{22})}} \Tilde{D}_{11}\right)} \dfrac{\Tilde{D}_{22}}{(\Tilde{D}_{21},\Tilde{D}_{22})})-\Tilde{D}_{21}$.

Then,
\begin{equation}\begin{split}\nonumber
A &= \Phi \mqty(\delta \\ \Tilde{D}_{11}) \mqty(\inv_{\delta} \Tilde{D}_{11} & -\inv_{\Tilde{D}_{11}} \delta),
\\
T &=\mqty(\Tilde{D}_{11} & \inv_{\Tilde{D}_{11}}{\delta}
    \\ -\delta & \inv_{\delta}{\Tilde{D}_{11}})
    \mqty(q_{\Phi|\Tilde{D}| D} & 
    \\ & 1)
\end{split}\end{equation}
is a solutions that has the smallest $|T|=q_{\Phi|\Tilde{D}| D }$.
\end{theorem}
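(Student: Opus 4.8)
The plan is to treat the two parts separately: part (1) is a short elimination argument that removes $A$ and $T$ at once, while part (2) is a direct verification of the displayed pair, its minimality then being immediate from (1). Throughout I may assume $\operatorname{rank}(D)=2$: if $\operatorname{rank}(D)<2$ then $|\tilde D|=0$, so $q_{\Phi|\tilde D|D}=1$ and both claims are vacuous.

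For (1): since $\operatorname{rank}(D)=2$, $D$ is invertible, and $|T|>0$ makes $T$ invertible, so $M:=D^{T}AT\in\mathbb{Z}_{2\times 2}$ and $A=(D^{-1})^{T}MT^{-1}$. I would conjugate the constraint $A-A^{T}=\Phi\mqty(&-1\\1&)$ by $D$, i.e.\ multiply it on the left by $D^{T}$ and on the right by $D$: the left-hand side collapses to $X-X^{T}$ with $X:=MT^{-1}D$ (using $D^{T}(D^{-1})^{T}=I$ together with $(MT^{-1}D)^{T}=D^{T}(T^{-1})^{T}M^{T}$), and Lemma \ref{lemma:MMT} turns the right-hand side into $\Phi|D|\mqty(&-1\\1&)$. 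Reading off the off-diagonal entry gives $X_{21}-X_{12}=\Phi|D|$. Now write $X=\tfrac{\gcd D}{|T|}\,Y$ with $Y:=M\,\operatorname{adj}(T)\,\tilde D\in\mathbb{Z}_{2\times 2}$ (both $\operatorname{adj}(T)$ and $\tilde D$ being integer matrices); then $Y_{21}-Y_{12}=\tfrac{|T|\,\Phi|D|}{\gcd D}=|T|\,\Phi|\tilde D|\gcd D$, and since the left side is an integer while $\gcd(\Phi|\tilde D|D)=\Phi|\tilde D|\gcd D$ (because $\gcd\tilde D=1$), this is exactly the assertion $q_{\Phi|\tilde D|D}\mid|T|$.

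For (2): I would substitute the displayed $A$ and $T$ and check the three defining conditions, the linchpin being $(\delta,\tilde D_{11})=1$. Abbreviating $g:=(\tilde D_{21},\tilde D_{22})$, $c:=\tilde D_{22}/g$, $b:=\tilde D_{21}/g$ (so $(b,c)=1$), one rewrites $\delta=g\big(c(1+b\,\inv_{\erase_{c}\tilde D_{11}}c)-b\big)$, whence Lemma \ref{lemma:(a,cx-b)}(2) gives $(\tilde D_{11},\delta/g)=1$, while $(\tilde D_{11},g)=(\tilde D_{11},\tilde D_{21},\tilde D_{22})=\gcd\tilde D=1$; multiplying, $(\tilde D_{11},\delta)=1$. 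With $A=\Phi\,\mathbf u\mathbf w^{T}$, $\mathbf u:=\mqty(\delta\\\tilde D_{11})$, $\mathbf w:=\mqty(\inv_{\delta}\tilde D_{11}\\-\inv_{\tilde D_{11}}\delta)$, the Bézout identity $\tilde D_{11}\,\inv_{\delta}\tilde D_{11}+\delta\,\inv_{\tilde D_{11}}\delta=(\tilde D_{11},\delta)=1$ gives at once $A_{21}-A_{12}=\Phi$ and $\mqty|\tilde D_{11}&\inv_{\tilde D_{11}}\delta\\-\delta&\inv_{\delta}\tilde D_{11}|=1$, so $|T|=q_{\Phi|\tilde D|D}>0$. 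For $D^{T}AT\in\mathbb{Z}_{2\times 2}$ I would exploit the rank-one form of $A$: $D^{T}\mathbf u=\gcd D\,\tilde D_{11}\mqty(\delta+\tilde D_{21}\\\tilde D_{22})$ and $\mathbf w^{T}\mqty(\tilde D_{11}&\inv_{\tilde D_{11}}\delta\\-\delta&\inv_{\delta}\tilde D_{11})=\mqty(1&0)$, so $D^{T}AT=\gcd D\,\Phi\,\tilde D_{11}\,q_{\Phi|\tilde D|D}\mqty(\delta+\tilde D_{21}&0\\\tilde D_{22}&0)$; since $\delta+\tilde D_{21}=\tilde D_{22}(1+b\,\inv_{\erase_{c}\tilde D_{11}}c)$ is a multiple of $\tilde D_{22}$, each entry is an integer multiple of $\gcd D\,\Phi|\tilde D|\,q_{\Phi|\tilde D|D}=p_{\Phi|\tilde D|D}\in\mathbb{Z}$. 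Finally, minimality of $|T|=q_{\Phi|\tilde D|D}$ follows from (1), which forces $|T'|\ge q_{\Phi|\tilde D|D}$ for every solution $T'$.

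The conceptual content of both parts is short; the friction I anticipate is combinatorial. In (1) one needs only care with the determinant/gcd identities. The main obstacle is in (2): matching the nested $\inv$/$\erase$/gcd expression defining $\delta$ to the precise hypotheses of Lemma \ref{lemma:(a,cx-b)}(2), and tracking the ambiguity built into the $\inv$ notation, so that both $(\delta,\tilde D_{11})=1$ and $\tilde D_{22}\mid\delta+\tilde D_{21}$ hold for every admissible choice; granting those, the remainder of (2) is a one-line matrix multiplication.
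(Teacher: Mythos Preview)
Your proposal is correct, and for part (1) you take a markedly cleaner route than the paper. The paper fixes $A$, invokes Theorem~\ref{theorem:|T|=q^2/(q,ad-bc)} to identify the minimal $|T|$ for $Q=D^{T}A$ as $q_Q^{2}/(q_Q,|\tilde Q|)$, and then spends the bulk of the argument proving $\gcd(\tilde D^{T}\tilde A)\mid|\tilde D|$ via Lemma~\ref{lemma:(a,b,c)=(x,y)=1=>(ax+by,cy)|ac and max is ac}(1), in order to conclude $q_{\Phi|\tilde D|D}\mid q_Q\mid q_Q^{2}/(q_Q,|\tilde Q|)$. Your elimination argument sidesteps both auxiliary results entirely: conjugating the antisymmetry constraint $A-A^{T}=\Phi\bigl(\begin{smallmatrix}&-1\\1&\end{smallmatrix}\bigr)$ by $D$ and clearing denominators with the adjugate of $T$ delivers $q_{\Phi|\tilde D|D}\mid|T|$ in three lines, uniformly over all admissible $A$. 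This is genuinely simpler; the paper's detour through Theorem~\ref{theorem:|T|=q^2/(q,ad-bc)} and Lemma~\ref{lemma:(a,b,c)=(x,y)=1=>(ax+by,cy)|ac and max is ac} is unnecessary for the divisibility statement alone (though those results have independent use elsewhere in the paper).

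For part (2) the two verifications are close in spirit, but again you are more direct: you multiply out $D^{T}AT$ explicitly via the rank-one factorization $A=\Phi\,\mathbf u\mathbf w^{T}$, whereas the paper stops at $D^{T}A$ and then appeals to Theorem~\ref{theorem:|T|=q^2/(q,ad-bc)}(2) to certify the displayed $T$. Your use of Lemma~\ref{lemma:(a,cx-b)}(2) to get $(\tilde D_{11},\delta/g)=1$, combined with $(\tilde D_{11},g)=\gcd\tilde D=1$, is exactly right; the paper reaches the same conclusion through Lemma~\ref{lemma:(a,b,c)=(x,y)=1=>(ax+by,cy)|ac and max is ac}(2), which itself invokes Lemma~\ref{lemma:(a,cx-b)} internally, so you are just cutting out the middle step. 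The only caveats worth flagging are the ones you already name: the B\'ezout pair $(\inv_{\delta}\tilde D_{11},\inv_{\tilde D_{11}}\delta)$ must be the \emph{same} choice in $A$ and in $T$ for the row identity $\mathbf w^{T}\bigl(\begin{smallmatrix}\tilde D_{11}&\inv_{\tilde D_{11}}\delta\\-\delta&\inv_{\delta}\tilde D_{11}\end{smallmatrix}\bigr)=(1\ \ 0)$ to hold, and the identification $q_{\Phi|\tilde D|D}=q_{\Phi|\tilde D|\gcd D}$ relies on $\gcd\tilde D=1$. Both are routine once stated.
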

\begin{proof}
(1)
Assume that $A$ and $T$ is a solution.

According to the Theorem \ref{theorem:|T|=q^2/(q,ad-bc)}, it is enough to prove that $q_{\Phi|\Tilde{D}| D }~|~\frac{q_Q^2}{(q_Q ,|\Tilde{Q}|)}$ where $Q$ is defined as $D^T A$.

First, we will show that $\gcd (\Tilde{D}^{T}\Tilde{A})~|~|\Tilde{D}|$.

\begin{equation}\nonumber
\Tilde{D}^{T}\Tilde{A}=\mqty(\Tilde{D}_{11} \Tilde{A}_{11}+\Tilde{D}_{21} \Tilde{A}_{21} & \Tilde{D}_{11} \Tilde{A}_{12}+\Tilde{D}_{21} \Tilde{A}_{22} \\
\Tilde{D}_{22} \Tilde{A}_{21} & \Tilde{D}_{22} \Tilde{A}_{22}).
\end{equation}
      
    \indent \indent 1) When $\Tilde{A}_{\ast 1},\Tilde{A}_{\ast 2}\neq0$
    
    According to the Lemma \ref{lemma:(a,b,c)=(x,y)=1=>(ax+by,cy)|ac and max is ac} (1),
    since $(\frac{\Tilde{A}_{11}}{\Tilde{A}_{\ast1}},\frac{\Tilde{A}_{21}}{\Tilde{A}_{\ast1}})=(\Tilde{D}_{11},\Tilde{D}_{21},\Tilde{D}_{22})=1$,

    \begin{equation}\begin{split}\nonumber
    & (\Tilde{D}_{11} \dfrac{\Tilde{A}_{11}}{\Tilde{A}_{\ast 1}} + \Tilde{D}_{21} \dfrac{\Tilde{A}_{21}}{\Tilde{A}_{\ast 1}},\Tilde{D}_{22} \dfrac{\Tilde{A}_{21}}{\Tilde{A}_{\ast 1}})~|~\Tilde{D}_{11}\Tilde{D}_{22}=|\Tilde{D}| \\
    \Rightarrow & 
    (\Tilde{D}_{11} \Tilde{A}_{11} + \Tilde{D}_{21} \Tilde{A}_{21},\Tilde{D}_{22} \Tilde{A}_{21})~|~\Tilde{A}_{\ast 1}|\Tilde{D}|
    \end{split}\end{equation}

    In the same way,
    \begin{equation}\begin{split}\nonumber
    (\Tilde{D}_{11} \Tilde{A}_{12} + \Tilde{D}_{21} \Tilde{A}_{22},\Tilde{D}_{22} \Tilde{A}_{22})~|~\Tilde{A}_{\ast 2}|\Tilde{D}|
    \end{split}\end{equation}

    Then,
    \begin{equation}\nonumber
    \gcd (\Tilde{D}^{T}\Tilde{A})~|~ (\Tilde{A}_{\ast 1},\Tilde{A}_{\ast 2})|\Tilde{D}|=|\Tilde{D}|.
    \end{equation}

\bigskip

    \indent \indent 2) When $\Tilde{A}_{\ast 1}$ or $\Tilde{A}_{\ast 2}$ is zero

    Without loss of generality, assume that $\Tilde{A}_{\ast 2}=1,~\Tilde{A}_{\ast 2}=0$.

    In the same way as the case 1),
    \begin{equation}\begin{split}\nonumber
    (\Tilde{D}_{11} \Tilde{A}_{11} + \Tilde{D}_{21} \Tilde{A}_{21},\Tilde{D}_{22} \Tilde{A}_{21})~|~\Tilde{A}_{\ast 1}|\Tilde{D}|.
    \end{split}\end{equation}

    Hence,
    \begin{equation}\nonumber
    \gcd (\Tilde{D}^{T}\Tilde{A})~|~ (\Tilde{A}_{\ast 1},0)|\Tilde{D}|=|\Tilde{D}|.
    \end{equation}

Since $\gcd (\Tilde{D}^{T}\Tilde{A})~|~|\Tilde{D}|$ and $\Phi=\gcd A (\Tilde{A}_{21}-\Tilde{A}_{12})$,
\begin{equation}\nonumber
q_{\Phi|\Tilde{D}| D }~|~q_{\gcd D \gcd A \gcd (\Tilde{D}^{T}\Tilde{A})}=q_Q.
\end{equation}

Because $q_Q~|~\frac{q_Q^2}{(q_Q ,|\Tilde{Q}|)}$,
\begin{equation}\nonumber
q_{\Phi|\Tilde{D}| D }~|~\dfrac{q_Q^2}{(q_Q ,|\Tilde{Q}|)}.
\end{equation}

\bigskip

(2)
According to the Lemma \ref{lemma:(a,b,c)=(x,y)=1=>(ax+by,cy)|ac and max is ac} (2),
$\Tilde{D}_{11} \inv_{\delta} \Tilde{D}_{11} + \delta \inv_{\Tilde{D}_{11}} \delta = (\delta,~\Tilde{D}_{11})=1$.

\begin{equation}\nonumber
A_{21}-A_{12}=\Phi (\Tilde{D}_{11} \inv_{\delta} \Tilde{D}_{11} + \delta \inv_{\Tilde{D}_{11}} \delta)=\Phi.
\end{equation}

\begin{equation}\nonumber
|T|=\mqty|\Tilde{D}_{11} & \inv_{\Tilde{D}_{11}}{\delta}
    \\ -\delta & \inv_{\delta}{\Tilde{D}_{11}}|
    \mqty|q_{\Phi|\Tilde{D}| D} & 
    \\ & 1|
    = q_{\Phi|\Tilde{D}| D}>0.
\end{equation}

\begin{equation}\begin{split}\nonumber
D^T A &=\Phi \gcd D \mqty(\Tilde{D}_{11} & \Tilde{D}_{21} \\  & \Tilde{D}_{22}) \mqty(\delta \\ \Tilde{D}_{11}) \mqty(\inv_{\delta} \Tilde{D}_{11} & -\inv_{\Tilde{D}_{11}} \delta)\\
&=\Phi|\Tilde{D}| \gcd D \mqty(\delta_0 \\ 1)\mqty(\inv_{\delta} \Tilde{D}_{11} & -\inv_{\Tilde{D}_{11}} \delta)
\end{split}\end{equation}
where $\delta_0 = 1+\dfrac{\Tilde{D}_{21}}{(\Tilde{D}_{21},\Tilde{D}_{22})} \inv_{\left(\erase_{\frac{\Tilde{D}_{22}}{(\Tilde{D}_{21},\Tilde{D}_{22})}} \Tilde{D}_{11}\right)} \dfrac{\Tilde{D}_{22}}{(\Tilde{D}_{21},\Tilde{D}_{22})}$.

According to Theorem \ref{theorem:|T|=q^2/(q,ad-bc)} (2), $T$ is a solution to $D^T A T \in \mathbb{Z}_{2\times2}$.
\end{proof}

\end{document}